\theoremstyle{plain}
\newtheorem{theorem}{Theorem}
\newtheorem{proposition}[theorem]{Proposition}
\newtheorem{lemma}[theorem]{Lemma}
\theoremstyle{definition}
\newtheorem{example}[theorem]{Example}
\newtheorem{definition}[theorem]{Definition}
\newcommand{\lfam}{\mathscr{L}}
\newcommand{\dfa}{\textrm{DFA}}
\newcommand{\cak}[1]{\textrm{DCA}(#1)}
\newcommand{\revcak}[1]{\textrm{REV-DCA}(#1)}
\newcommand{\invdelta}{\delta^{\scriptscriptstyle\leftarrow}}
\newcommand{\invdeltap}{\delta^{'\scriptscriptstyle\leftarrow}}
\newcommand{\invvdashm}{\mathrel{\vdash_M^{\protect\raisebox{2pt}{$\scriptscriptstyle\leftarrow$}}}}
\newcommand{\invvdashmp}{\mathrel{\vdash_{M'}^{\protect\raisebox{2pt}{$\scriptscriptstyle\leftarrow$}}}}
\newcommand{\leftend}{\mathord{\vartriangleright}}
\newcommand{\rightend}{\mathord{\vartriangleleft}}
\newcommand{\dollar}{\texttt{\$}}
\newcommand{\valc}{\textrm{VALC}}
\newcommand{\eoe}{\hspace*{\fill} $\blacksquare$\smallskip}
\DeclareMathOperator{\empt}{\bot}
\title{Reversible Computations of One-Way Counter Automata}
\author{Martin Kutrib and Andreas Malcher
\institute{%
  Institut f\"ur Informatik, Universit\"at Giessen\\
  Arndtstr.~2, 35392 Giessen, Germany}
\email{$\{$kutrib,andreas.malcher$\}$@informatik.uni-giessen.de}
}
\begin{document}

\maketitle

\begin{abstract}
Deterministic one-way time-bounded multi-counter automata are studied with 
respect to their ability 
to perform reversible computations, which means that the automata are also backward
deterministic and, thus, are able to uniquely step the computation back and forth. 
We study the computational capacity of such devices and obtain separation results
between irreversible and reversible $k$-counter automata for superpolynomial time.
For exponential time we obtain moreover an infinite and tight 
hierarchy with respect to the number of counters.
This hierarchy is shown with Kolmogorov complexity
and incompressibility arguments. In this way, on passing
we can prove this hierarchy also for ordinary counter automata.
This improves the known hierarchy for ordinary counter automata
in the sense that here we consider a weaker acceptance condition.
Then, it turns out that $k+1$ reversible counters are not better
than $k$ ordinary counters and vice versa.
Finally, almost all usually studied decidability questions turn out to be undecidable
and not even semidecidable for reversible multi-counter automata, if at least two
counters are provided. 
\end{abstract}

\section{Introduction}

In the last years, reversible computational models have earned a lot of attention.
The reversibility of a computation basically means that every configuration
has at most one unique successor configuration and at most one unique
predecessor configuration.
One incentive to study such computational devices performing 
logically reversible computations is probably the question posed by
Landauer of whether logical irreversibility is an unavoidable feature
of useful computers. Landauer has demonstrated the
physical and philosophical importance of this question
by showing that whenever a physical computer throws
away information about its previous state it must generate a corresponding 
amount of entropy that results in heat dissipation
(see~\cite{Bennet:1973:lrc} for further details and references).
First investigations on reversible computations have been started 
in the sixties of the last century both for Turing machines as well as
for the massively parallel model of cellular automata.
For both models it is known that irreversible computations can be made
reversible. 
For Turing machines it is shown in the work of Lecerf~\cite{lecerf:lmmtr:1963}
and Bennett~\cite{Bennet:1973:lrc} that for every Turing machine an equivalent 
reversible Turing machine can be constructed. 
For cellular automata it is known from~\cite{Morita:1995:rsodica} that every, 
possibly irreversible,
one-dimensional cellular automaton can always be simulated by a reversible 
one-dimensional cellular automaton in a constructive way.

At the other end of the Chomsky hierarchy there are the regular languages.
Here, Angluin has introduced
reversible computations in deterministic finite automata $(\dfa$) and 
showed that reversible $\dfa$s are weaker than $\dfa$s in general~\cite{Angluin:1982:irl}.
Moreover, it is known that two-way $\dfa$s and reversible two-way $\dfa$s
are equally powerful~\cite{kondacs:1997:pqfsa}. 
Recent results on reversible regular languages 
concern the descriptional complexity and the minimality of 
reversible (one-way) $\dfa$s and are obtained  
in~\cite{holzer:2018:mrdfa,lavado:2017:marra,lavado:2019:crora}.
Furthermore, due to their nature (real-time) quantum finite automata can be
said to be inherently reversible~\cite{bertoni:2001:rlaqfa,kondacs:1997:pqfsa}.
They do not capture the regular languages either. This is in contrast to
modified recent definitions that lead to quantum finite automata
which recognize all and only the regular languages with bounded 
error~\cite{hirvensalo:2010:qaote,say:2014:qfaami,yilmaz:2022:ecrlr}.
See~\cite{mereghetti:2021:qfafttp} and the references therein for a recent
survey on quantum automata.

For deterministic pushdown automata, the reversible variant has been introduced
in~\cite{kutrib:2012:rpa}, where its is in particular shown that the
reversible variant is weaker than the general one.
A special case of deterministic pushdown automata are deterministic one-counter
automata where the pushdown alphabet consists of one symbol only, apart from
the bottom symbol. Hence, the pushdown store can only be used to count a number of
symbols and no longer to store a sequence of different symbols.
In general, multi-counter automata are finite-state automata equipped with multiple counters which can
be incremented, decremented, and tested for zero.
It is well known that general one-way deterministic two-counter automata are computationally universal, that is, they can
simulate Turing machines~\cite{Minsky:1961:rupptotttm}. However, the latter simulation may need
an unbounded amount of space. Hence, deterministic space-bounded, as well as time-bounded, multi-counter automata
have been considered in~\cite{fischer:1968:cmacl} where, in particular, the case when the available time 
is restricted to real-time is studied. The authors establish in this case an
infinite and strict counter hierarchy as well as positive and negative closure results. The generalization to multi-counter
automata that may work nondeterministically as well as may use two-way motion on the input tape has
been done by Greibach~\cite{greibach:1976:rotconcl}. 
Recent results on one-way deterministic multi-counter automata are given by Petersen
in~\cite{petersen:2011:sbtbcm} where, in particular, some hierarchy results 
of Greibach concerning counters and polynomial time could be improved and tightened
at the price of a stronger acceptance condition than defined in~\cite{greibach:1976:rotconcl}.
Finally, we already mentioned that one-counter automata can be seen as a special
case of pushdown automata. Hence, multi-counter automata may be considered a 
special case of
multi-pushdown automata introduced and studied in~\cite{Breveglieri:1996:mpdlag}.

In this paper, we will consider reversible multi-counter automata. Such automata
have been investigated by Morita in~\cite{morita:1996:urtcm} with respect to
universal computations. In detail, the universality result of Minsky could
be improved, 
namely, it is shown by Morita that any Turing machine can already be simulated 
by a \emph{reversible} two-counter automaton. It should be noted that,
naturally, the simulation
of Minsky as well as the reversible simulation of Morita may need
an unbounded amount of space and time. In addition, the input has to be provided 
suitably encoded by using prime numbers. In this paper, we will therefore
consider time-bounded (and hence space-bounded) reversible multi-counter automata 
that process a given plain unencoded input. The paper is organized as follows.
The definition of the model and illustrating examples are given in Section~\ref{sec:prelim}.
In Section~\ref{sec:compcap} we study the computational capacity in detail and 
obtain as first result that there is a regular language that can clearly be accepted
by irreversible $k$-counter automata in real-time, for any $k\geq 0$, but
cannot be accepted by any \emph{reversible} $k$-counter automaton within 
time~$2^{o(n)}$, regardless of the number of counters.
We then prove a tight counter hierarchy for reversible counter automata working in
exponential time. 
This hierarchy is shown with Kolmogorov complexity
and incompressibility arguments. In this way, on passing
we can prove this hierarchy also for ordinary counter automata.
This improves the known hierarchy for ordinary counter automata~\cite{petersen:2011:sbtbcm}
in the sense that here we consider a weaker acceptance condition.
Finally, we have
incomparability results between reversible and irreversible counter automata
if the reversible automata have strictly more counters than 
the irreversible once.
Hence, we can draw a complete picture
of the relations between the language families discussed. 
In Section~\ref{sec:deci} we investigate decidability questions for reversible counter
automata. It turns out that all usually studied questions such as, for example,
emptiness, finiteness, infiniteness, inclusion, and equivalence are undecidable
and not even semidecidable for reversible counter automata with at least two counters.

\section{Preliminaries}\label{sec:prelim}

We denote the non-negative integers $\{0,1,2,\dots\}$ by $\mathbb{N}$.
Let $\Sigma^*$ denote the set of all words over the finite alphabet $\Sigma$.
We write $\lambda$ for the \emph{empty word},
and let $\Sigma^+ = \Sigma^* \setminus \{\lambda\}$.
The set of words of length at
most $n\geq 0$ is denoted by $\Sigma^{\leq n}$. 
The \emph{reversal} of a word $w$ is denoted by $w^R$.  For the \emph{length} of $w$, we
write~$|w|$. The number of occurrences of a symbol $a \in \Sigma$ in $w \in \Sigma^*$ is
written $|w|_a$. We use $\subseteq$ for \emph{inclusions} and~$\subset$ for \emph{strict inclusions}.

Let $k \geq 0$ be an integer. A one-way $k$-counter 
automaton is a finite automaton having a single read-only input tape whose
inscription is the input word in between two endmarkers (we provide two
endmarkers in order to have a definition consistent with two-way devices). 
In addition, it is equipped with~$k$ counters.
At the outset of a computation the counter automaton is in the designated 
initial state, the counters are set to zero, and the head of the input tape 
scans the left endmarker.
Dependent on the current state, the currently scanned input symbol, and
the information whether the counters are zero or not, the counter automaton 
changes its state, increases or decreases the counters, and moves the 
input head one cell to the right or not. 
The automata have no extra output tape but
the states are partitioned into accepting and rejecting states.

\begin{definition}\label{def:dca-k}
A \emph{deterministic one-way counter automaton with $k\geq 0$ counters
(abbreviated as $\cak{k}$)} is a system 
$M=\langle Q, \Sigma,k,\leftend,\rightend, \delta, q_0, F\rangle$, where
\begin{enumerate}
\item
$Q$ is the finite set of \emph{internal states}, 
\item
  $\Sigma$ is the finite set of \emph{input symbols},
\item 
$k\geq 0$ is the \emph{number of counters}, 
\item 
 $\leftend \notin \Sigma$ is the \emph{left} and
 $\rightend\notin \Sigma$ is the \emph{right endmarker},
\item
  $q_0 \in Q$ is the \emph{initial state}, 
\item 
 $F\subseteq Q$ is the set of \emph{accepting states}, and 
\item
  $\delta\colon Q \times(\Sigma\cup\{\leftend, \rightend\})\times \{+,\empt\}^k \to 
       Q\times\{0,1\}\times \{-1,0,1\}^k$ 
   is the partial transition function that dependent on the current state, 
   the current input symbol, and the current statuses of the counters ($+$
   indicates a positive value and $\empt$ a zero). The transition function determines the successor
   state, 
the input head movement
       ($0$ means to keep the head on the current square, and 
       $1$ means to move one square to the right), and the operations on the
       counters ($-1$ means to decrease, $+1$ to increase, and $0$ to keep the
       current value).
\end{enumerate}
\end{definition}

It is understood that the head of the input tape never moves beyond the
endmarkers and that a counter value zero is never decreased. 

A \emph{configuration} of a $\cak{k}$  
$M=\langle Q, \Sigma,k,\leftend,\rightend, \delta, q_0, F\rangle$
is a $(k+3)$-tuple $(q,w,h, c_1, c_2,\dots, c_k)$, where $q\in Q$ is the current
state, $w\in \Sigma^*$ is the input, $h\in \{0,1,\dots,|w|+1\}$ is the current 
head position on the input tape, and
$c_i\geq 0$ is the current value of counter $i$ , $1\leq i\leq k$.
The \emph{initial configuration}
for input $w$ is set to $(q_0, w, 0, 0, \dots, 0)$.
During the course of its computation,~$M$ runs through a sequence 
of configurations. One step from a configuration to its successor 
configuration is denoted by~$\vdash_M$.

A $\cak{k}$ \emph{halts} if the transition function is undefined for the
current configuration (we do not require that the head has to be placed on the
right endmarker in order to have a definition consistent with two-way devices).  
An input word $w$ is \emph{accepted} if
the machine halts at some time in an accepting state, otherwise it is
\emph{rejected}. The \emph{language accepted} by $M$ is
$L(M)=\{\, w\in \Sigma^* \mid w \text{ is accepted by } M\,\}$.

Now we turn to \emph{reversible} counter automata.
Basically, reversibility is meant with respect to the possibility of 
stepping the computation back and forth. 
So, the automata have also to be
backward deterministic.
That is, any configuration must have at most one 
predecessor which, in addition, is computable by a counter automaton.
In particular for the read-only input tape, the 
machines reread the input symbol which they have been read in a preceding 
forward computation step. 
Therefore, for reverse computation steps of \emph{one-way}
machines the head of the input tape is either moved to the \emph{left} or 
stays stationary. One can imagine that in a forward step, first the input
symbol is read and then the input head is moved to its new position, whereas
in a backward step, first the input head is moved to its new position and then 
the input symbol is read.
So, a $\cak{k}$ $M$ is said to be \emph{reversible} ($\revcak{k}$) if  
and only if there exists a \emph{reverse transition function}
$\invdelta\colon Q \times(\Sigma\cup\{\leftend, \rightend\})\times \{+,\empt\}^k \to 
       Q\times\{0,-1\}\times \{-1,0,1\}^k$ 
inducing a relation~$\invvdashm$ from a
configuration to its \emph{predecessor configuration}, so that  
\begin{multline*}
(q',w,h', c'_1, c'_2,\dots, c'_k) \invvdashm (q,w,h, c_1, c_2,\dots, c_k)
\text{ if and only if }\\
(q,w,h, c_1, c_2,\dots, c_k)\vdash_M (q',w,h', c'_1, c'_2,\dots, c'_k).
\end{multline*}

It is well known that general one-way two-counter automata 
are computational universal, that is, they can
simulate Turing machines~\cite{Minsky:1961:rupptotttm}. So, in the sequel we also
consider restricted variants. More precisely, we consider time limits for
accepting computations. 
Let $t\colon\mathbb{N}\to\mathbb{N}$ be a
function. A $\cak{k}$ $M$ is said to be \emph{$t$-time-bounded} or of
\emph{time complexity} $t$ if and only if it halts on every input $w\in L(M)$ 
after at most $t(|w|)$ time steps. A particular time bound is \emph{real
  time}, that is, the smallest time at which the counter automaton can
read the input entirely (including the right endmarker). So, here real-time
is defined to be $t(n)=n+2$. 
A $\cak{k}$ is said to be
\emph{quasi real time} if there is a constant that bounds the number of
consecutive stationary moves in all accepting computations.

The family of all languages which can be accepted by some device $\mathsf{X}$ with time
complexity $t$ is denoted by $\lfam_t(\mathsf{X})$.

To clarify our notion we continue with examples.

\begin{example}\label{exa:cfl}
The deterministic context-free language 
$\{\,w \in \{a,b\}^*\mid |w|_a=|w|_b\,\}$
is accepted by the real-time $\revcak{1}$ 
$
M=\langle \{q_0,q_1,q_a,q_b,q_f\},\{a,b\},1, \leftend,\rightend, \delta,q_0,\{q_f\}
\rangle$ where the transition functions $\delta$ and $\invdelta$
are as follows.
\begin{center}
\renewcommand{\arraystretch}{1.1}
\begin{tabular}[t]{rccc}
\hline
\multicolumn{4}{c}{$\revcak{1}$ forward}\\
\hline
(1) &  $\delta(q_0,\leftend,\bot)$ &=& $(q_1, 1, 0)$\\
(2) &  $\delta(q_1,a,\bot)$ &=& $(q_a, 1, 0)$\\
(3) &  $\delta(q_1,b,\bot)$ &=& $(q_b, 1, 0)$\\
(4) &  $\delta(q_1,\rightend,\bot)$ &=& $(q_f, 0, 0)$\\
(5) &  $\delta(q_a,a,\bot)$ &=& $(q_a, 1, +1)$\\
(6) &  $\delta(q_a,b,\bot)$ &=& $(q_1, 1, 0)$\\
(7) &  $\delta(q_a,a,+)$ &=& $(q_a, 1, +1)$\\
(8) &  $\delta(q_a,b,+)$ &=& $(q_a, 1, -1)$\\
(9) &  $\delta(q_b,a,\bot)$ &=& $(q_1, 1, 0)$\\
(10) &  $\delta(q_b,b,\bot)$ &=& $(q_b, 1, +1)$\\
(11) &  $\delta(q_b,a,+)$ &=& $(q_b, 1, -1)$\\
(12) &  $\delta(q_b,b,+)$ &=& $(q_b, 1, +1)$\\
\hline
\end{tabular}
\quad
\begin{tabular}[t]{rccc}
\hline
\multicolumn{4}{c}{$\revcak{1}$ backward}\\
\hline
(1) &  $\invdelta(q_1,\leftend,\bot)$ &=& $(q_0, -1, 0)$\\
(2) &  $\invdelta(q_1,a,\bot)$ &=& $(q_b, -1, 0)$\\
(3) &  $\invdelta(q_1,b,\bot)$ &=& $(q_a, -1, 0)$\\
(4) &  $\invdelta(q_f,\rightend,\bot)$ &=& $(q_1, 0, 0)$\\
(5) &  $\invdelta(q_a,a,\bot)$ &=& $(q_1,-1,0)$\\
(6) &  $\invdelta(q_a,b,\bot)$ &=& $(q_a,-1,+1)$\\
(7) &  $\invdelta(q_a,a,+)$ &=& $(q_a,-1,-1)$\\
(8) &  $\invdelta(q_a,b,+)$ &=& $(q_a,-1,+1)$\\
(9) &  $\invdelta(q_b,a,\bot)$ &=& $(q_b,-1,+1)$\\
(10) &  $\invdelta(q_b,b,\bot)$ &=& $(q_1,-1,0)$\\
(11) &  $\invdelta(q_b,a,+)$ &=& $(q_b,-1,+1)$\\
(12) &  $\invdelta(q_b,b,+)$ &=& $(q_b,-1,-1)$\\
(13) &  $\invdelta(q_1,\rightend,\bot)$ &=& $(q_1,-1,0)$\\
\hline
\end{tabular}
\end{center}
The basic idea of the construction is as follows. We use the counter for storing the
difference between the number of $a$'s and $b$'s in the input. However, to enable
the deterministic backward computation, the difference one is remembered in the
states and the counter is only used to store larger differences. Hence, the state 
$q_a$ indicates that there are more $a$'s than $b$'s in the input read so far 
and $q_b$ denotes the opposite. Now, the computation is started with transition (1)
which moves from the left endmarker to the first symbol and enters state $q_1$ 
that indicates that the number of $a$'s and $b$'s currently read is equal. 
Then, transitions (2) and (3) are used to count the difference one. 
Transitions (5), (7) and (10), (12) increase the difference by one and
transitions (8) and (11) decrease the difference by one.
Finally, if the difference is one, then transitions (6) and (9) can be used to
decrease the difference to zero, to enter state $q_1$, and to enter an
accepting state when reading the right endmarker with transition (4).
For the backward computation we just have to do the opposite by switching the
roles of $a$ and $b$. For example, transitions (5) and (7) of $\delta$ increase 
the difference by one when an $a$ is read and  
there have been more~$a$'s than $b$'s read so far. 
This difference is later decreased by one with transitions~(6) and (8)
when a~$b$ is read. Thus, for $\invdelta$ we have to increase the difference when
reading a $b$ (transitions (6) and (8)) and to decrease the difference when
reading an~$a$ (transitions~(5) and (7)). 
The transitions (2)--(3) and \mbox{(9)--(12)} can analogously be translated. 
To translate the transitions (1) and (4) concerning the endmarkers is 
straightforward.
It follows immediately from the transition function that $M$ moves its head in
any but the last computation step. So, it takes at most $n+2$ steps, that is,
it
works in real time.
\eoe
\end{example}

\begin{example}\label{exa:non-cfl}
The non-context-free language 
$\{\,w \in \{a,b,c\}^*\mid |w|_a=|w|_b=|w|_c\,\}$
can be accepted by a real-time $\revcak{2}$.
The basic idea is to  
implement the construction described in Example\ref{exa:cfl} twice,
namely, one counter is used to check whether the number of $a$'s is equal to the
number of $b$'s and the other counter is used to check whether the number of $a$'s is
equal to the number of $c$'s. More precisely, the construction can be 
realized using the Cartesian product of the construction from Example\ref{exa:cfl},
and one component of the state set and one counter suffices to check the
difference between $a$'s and $b$'s ($c$'s are ignored) and 
$a$'s and $c$'s ($b$'s are ignored), respectively.
The input is accepted if in both components the state $q_1$ is reached
when reading the right endmarker. In this case, the numbers of $a$'s and $b$'s are equal
as well as the numbers of $a$'s and $c$'s. Hence, the numbers of $b$'s and $c$'s
are equal as well.
Since the computation in each component is
reversible, the overall computation is also reversible.

This idea can straightforwardly be generalized to show that the language
$$\{\,w \in \{a_1,a_2,\ldots,a_k\}^*\mid |w|_{a_1}=|w|_{a_2}=\cdots=|w|_{a_k}\,\}$$
 for $k \ge 2$ and an alphabet $\{a_1,a_2,\ldots,a_k\}$ of $k$ symbols
 can be accepted by a real-time $\revcak{k-1}$.
 \eoe
\end{example}

\section{Computational Capacity of Reversible Counter Automata}\label{sec:compcap}

Here, we consider the computational capacities of reversible counter automata
and compare it with the general variants.
First, we are interested in the role played by stationary moves in 
quasi real-time computations. In order to settle this role, we first
deal with a more general issue. For, not necessarily reversible, counter
automata the restriction to be able to add or subtract only $1$ per step
to or from the counters is not a limitation of the computational capacity. Clearly,
any counter automaton that may add or subtract an arbitrary number to or from
the counters in a single step can be simulated by a sequence of stationary moves that increment
or decrement the counters by only $1$ per step. 
However, the simulation can be done without loss of time. It has been
mentioned in~\cite{fischer:1968:cmacl} without details. Here, we will
show that the construction can be done such that reversibility is preserved.

\newpage

\begin{lemma}\label{lem:add-c}
Let $k, c \geq 0$ be integers. For every $\revcak{k}$ that obeys some time complexity
$t(n)$ and that has the ability to alter the value of each counter independently
by any integer between $-c$ and $c$ in a single step, 
an equivalent ordinary $\revcak{k}$ obeying the time complexity
$t(n)$ can effectively be constructed.
\end{lemma}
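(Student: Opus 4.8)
The plan is to simulate one "big-step" of the powerful machine — which may add any integer in $\{-c,\dots,c\}$ to each of the $k$ counters simultaneously — by a bounded number of ordinary steps that change each counter by at most $1$, while keeping both the forward and backward transition functions deterministic. The naive idea of stringing together $c$ stationary increments per counter fails on two counts: it costs time (violating the requirement that the time bound $t(n)$ be preserved exactly), and a cascade of stationary moves is hard to make backward deterministic. The key observation that rescues both issues is that, except possibly for the very last step, every step of a one-way machine that respects real-time-like bounds can be assumed to move the head; and even in general, over any block of $2c+1$ consecutive configurations the head moves at least once (otherwise we could shortcut). So instead of inserting new time steps I would \emph{redistribute} the counter changes: spread the big jump of one transition over the moving steps that the original machine already performs, storing in the finite control how much "credit" or "debt" remains to be applied to each counter.

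Concretely, I would augment the state set by a vector $(d_1,\dots,d_k)\in\{-c,\dots,c\}^k$ recording, for each counter, the pending amount still to be added (positive) or subtracted (negative). When the original machine in state $q$ would execute a transition adding $v_i$ to counter $i$, the new machine instead adds $\mathrm{sgn}(d_i' )$ — i.e. at most $1$ — to each counter and updates the pending vector; the arithmetic is set up so that $|d_i|$ never exceeds $c$ and the actual counter value of the simulating machine always equals the value of the simulated counter "rounded toward the pending direction". The zero-tests are handled by noting that the simulated counter $i$ is zero precisely when the real counter is zero \emph{and} $d_i \le 0$, or the real counter is $1$ and $d_i = -1$, etc. — finitely many cases, all visible from the real counter status plus the finite control. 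One must verify that this bookkeeping lets us simulate each original step in exactly one step, so $t(n)$ is preserved; the halting/acceptance behaviour carries over by requiring that the simulation only accepts when all pending vectors are $0$ (which can be forced since the original machine, when it halts, has no pending changes in the very last transition — or one absorbs the final flush into the last step).

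For reversibility, I would define the reverse transition function of the new machine directly from the reverse transition function $\invdelta$ guaranteed for the powerful $\revcak{k}$: a predecessor configuration is obtained by running $\invdelta$ on the simulated counter statuses — which, as above, are a function of the real statuses and the pending vector — and then undoing the one-step redistribution, recovering the previous pending vector. Backward determinism of the new machine then reduces to backward determinism of the original, plus the fact that the redistribution map (pending vector, real increment) $\mapsto$ (new pending vector) is invertible given the target, which I would check is the case by construction. The main obstacle I expect is precisely this last point: arranging the redistribution scheme so that it is simultaneously (i) bounded (pending values stay in $[-c,c]$), (ii) faithful (real counter value stays synchronized so that zero-tests can be decoded), and (iii) injective in the backward direction. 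Getting all three at once is the delicate part; the rest is routine finite-control engineering and a straightforward induction on the length of the computation showing the two machines track each other step for step.
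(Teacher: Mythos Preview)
Your credit/debt scheme has a genuine gap that you yourself flag but do not resolve: the pending vector does not stay bounded. If the original machine adds $c$ to some counter in every step, then after $t$ steps the simulated value is $ct$, while the real counter---which you change by at most $1$ per step---is at most $t$. The pending amount, being the difference, grows like $(c-1)t$ and cannot live in any fixed finite range $\{-c,\dots,c\}$. No rule of the form ``apply $\mathrm{sgn}(d_i')$ and carry the rest'' can fix this, because the obstruction is information-theoretic: an ordinary counter growing at rate $1$ plus a bounded state component cannot track an integer growing at rate $c$. Your zero-test discussion (``real counter is $1$ and $d_i=-1$'', etc.) already shows strain from this mismatch, and the earlier remark that the head moves at least once in every window of $2c{+}1$ steps is neither true for general $\cak{k}$ nor relevant once you commit to a step-for-step simulation.

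The paper's proof avoids the obstruction by \emph{scaling} rather than \emph{offsetting}: represent the simulated value $x$ as the pair $(\lfloor x/c\rfloor,\ x\bmod c)$, storing the quotient in the real counter and the remainder $m_i\in\{0,\dots,c-1\}$ in the finite control. Any update $\hat b_i\in\{-c,\dots,c\}$ changes $\lfloor x/c\rfloor$ by at most $1$ (three cases, according to whether $m_i+\hat b_i$ lands in $[0,c-1]$, below it, or above it), the remainder stays in $\{0,\dots,c-1\}$ automatically, and the simulated counter is zero iff the real counter is zero \emph{and} the remainder is zero---so the zero-test is immediate. Reversibility is then obtained by applying the same quotient--remainder update to $-\hat b_i$, supplied by the reverse transition function of the original machine, and verifying case by case that it undoes the forward step. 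This base-$c$ representation is exactly what makes your conditions (i)--(iii) hold simultaneously; once you replace ``pending offset'' by ``stored remainder'', the rest of your plan goes through.
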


\begin{proof}
Let $M=\langle Q, \Sigma,k,\leftend,\rightend, \delta, q_0, F\rangle$
be a $\revcak{k}$ that has the ability to alter the value of each counter 
independently by any integer between $-c$ and $c$ in a single step.
The basic idea of the construction of an equivalent ordinary
$\revcak{k}$ $M'=\langle Q', \Sigma,k,\leftend,\rightend, \delta', q'_0,
F'\rangle$ is as follows. A counter value~$x$ of $M$ is represented by
the counter value $\lfloor \frac{x}{c}\rfloor$ and a state component
that stores $x\bmod c$. To this end, we set
$Q'= Q\times \{0,1,\dots, c-1\}^k$, $q'_0=(q_0,(0,0,\dots,0))$,
and $F'= F\times \{0,1,\dots, c-1\}^k$.
The transition function $\delta'$ has to be constructed, in particular, such
that it is reversible.

For $q \in Q$, $m_1, m_2,\dots, m_k\in  \{0,1,\dots, c-1\}$, 
$a\in(\Sigma\cup\{\leftend, \rightend\})$, $d_1, d_2,\dots, d_k\in\{+,\empt\}$, we define
\begin{equation}\label{eq:deltan}
\delta'((q,(m_1,m_2,\dots, m_k)), a, d_1,d_2,\dots, d_k)
= 
((q', (m_1', m_2',\dots, m_k')), s, b_1, b_2,\dots, b_k)
\end{equation}
if and only if
\begin{equation}\label{eq:deltao}
\delta(q,a,\hat{d}_1, \hat{d}_2,\dots, \hat{d}_k)
=
(q',s, \hat{b}_1, \hat{b}_2,\dots,\hat{b}_k)
\end{equation}
where
$\hat{d}_i = \empt$ if $m_i=0$ and $d_i=\empt$, and $\hat{d}_i = +$ otherwise,
and
$$
(m_i',b_i)=\begin{cases}
(m_i+\hat{b}_i,0) & \text{ if } 0\leq m_i+\hat{b}_i \leq c-1\\
(m_i+\hat{b}_i+c,-1) & \text{ if } m_i+\hat{b}_i <0\\
(m_i+\hat{b}_i-c,1) & \text{ if } m_i+\hat{b}_i > c-1\\
\end{cases},
$$
for $1\leq i\leq k$. Note that $-1\leq b_i\leq 1$ and $-c\leq \hat{b}_i\leq
c$.

\noindent
Immediately, from the construction it follows that
$(q_0, w, 0,0,\dots,0)\vdash_M^* (q, w, h, c_1,c_2,\dots, c_k)$
if and only if 
$((q_0,(0,0,\dots,0)), w, 0,0,\dots,0)\vdash_{M'}^* 
       ((q,(c_1\bmod c,\dots, c_k\bmod c)), w, h, 
              \lfloor\frac{c_1}{c}\rfloor ,\dots,
              \lfloor\frac{c_k}{c}\rfloor)$.
So, we conclude that $L(M)=L(M')$ and that $M$ and $M'$ share the same time
complexity.
It remains to be shown that~$M'$ is reversible.

Since $M$ is reversible, the transition from Equation~(\ref{eq:deltao}) can be
reversed, say by
$$
\invdelta(q',a,\tilde{d}_1, \tilde{d}_2,\dots, \tilde{d}_k)=(q,s, -\hat{b}_1,
-\hat{b}_2,\dots,-\hat{b}_k).
$$
Then, we construct
\begin{equation}\label{eq:deltanr}
\invdeltap((q',(m'_1,m'_2,\dots, m'_k)), a, d'_1,d'_2,\dots, d'_k)
= 
((q, (m''_1, m''_2,\dots, m''_k)), s, b'_1, b'_2,\dots, b'_k)
\end{equation}
where
$\tilde{d}_i = \empt$ if $m'_i=0$ and $d'_i=\empt$, and $\tilde{d}_i = +$
otherwise, and
$$
(m''_i,b'_i)=\begin{cases}
(m'_i-\hat{b}_i,0) & \text{ if } 0\leq m'_i-\hat{b}_i \leq c-1\\
(m'_i-\hat{b}_i+c,-1) & \text{ if } m'_i-\hat{b}_i <0\\
(m'_i-\hat{b}_i-c,1) & \text{ if } m'_i-\hat{b}_i > c-1\\
\end{cases},
$$
for $1\leq i\leq k$. 

In order to show that Equation~(\ref{eq:deltanr}) reverses Equation~(\ref{eq:deltan}) we distinguish
the three cases of the construction of $(m'_i,b_i)$.
\begin{description}
\item[Case $(m'_i,b_i)=(m_i+\hat{b}_i,0):$]
  Here we know $b_i=0$ and $0\leq m_i+\hat{b}_i \leq c-1$.
  Since $m'_i-\hat{b}_i = m_i+\hat{b}_i-\hat{b}_i=m_i\in\{0,1,\dots,c-1\}$ we
  derive that $(m''_i,b'_i)$ has been set to $(m'_i-\hat{b}_i,0)$. Therefore, we
  conclude $m''_i=m'_i-\hat{b}_i=m_i$ and $b'_i=0$. So, Equation~(\ref{eq:deltanr})
  reverses Equation~(\ref{eq:deltan}) in this case.
\item[Case $(m'_i,b_i)=(m_i+\hat{b}_i+c,-1):$]
  The condition for this case is $m_i+\hat{b}_i <0$ and we know $b_i=-1$.
Since $m'_i-\hat{b}_i = m_i+\hat{b}_i+c-\hat{b}_i=m_i+c > c-1$ we
  derive that $(m''_i,b'_i)$ has been set to $(m'_i-\hat{b}_i-c,1)$. 
Therefore, we
  conclude $m''_i=m'_i-\hat{b}_i-c=m_i$ and $b'_i=1$. So, Equation~(\ref{eq:deltanr}) 
  reverses Equation~(\ref{eq:deltan}) also in this case.

\item[Case $(m'_i,b_i)=(m_i+\hat{b}_i-c,1):$]
  Here we have $b_i=1$ and $m_i+\hat{b}_i >c-1$.
  Since $m'_i-\hat{b}_i = m_i+\hat{b}_i-c-\hat{b}_i=m_i-c < 0$ we
  derive that $(m''_i,b'_i)$ has been set to $(m'_i-\hat{b}_i+c,-1)$. 
  Therefore, we
  conclude $m''_i=m'_i-\hat{b}_i+c=m_i$ and $b'_i=-1$. So, Equation~(\ref{eq:deltanr})
  reverses Equation~(\ref{eq:deltan}) in this case, as well.
\end{description}
From the three cases we derive that $M'$ is reversible.
\end{proof}

The next step is to use Lemma~\ref{lem:add-c} to show that
quasi real-time computations can be sped-up to real time.

\begin{theorem}\label{theo:dcak-quasirealtime}
Let $k \geq 0$ be an integer. 
For every quasi real-time $\revcak{k}$ an equivalent real-time $\revcak{k}$ can
effectively be constructed.
\end{theorem}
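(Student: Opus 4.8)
The plan is to turn the given quasi real-time $\revcak{k}$ into a real-time one in two stages: first compress each maximal run of stationary moves together with the head-moving step that follows it into a single step, which requires allowing the machine to alter each counter by a bounded amount larger than $1$ per step, and then remove that extra ability by invoking Lemma~\ref{lem:add-c}.

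First I would fix the constant $d$ bounding the number of consecutive stationary moves in all accepting computations of the given $\revcak{k}$ $M$ and set $c=d+1$. In a computation of $M$ call a \emph{block} a maximal run of at most $d$ stationary moves together with the moving step succeeding it; if $M$ ever makes $d+1$ consecutive stationary moves the current computation is not accepting, and the block is cut off there and treated as a dead end. A block contains at most one moving step, so a machine $M'$ executing one block per step moves its head at most once per step and thus stays within the real-time bound $n+2$. I would let the states of $M'$ be the states of $M$ augmented by a bounded amount of information about the counters — enough to know, for each counter, its exact value as long as that value does not exceed $c$ and to detect the step at which it leaves this range — arranged, as in the representation used in Lemma~\ref{lem:add-c}, so that the remaining magnitude sits in $M'$'s counter $i$. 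This is where the construction needs the extended ability: inside a block the scanned symbol is fixed, so $M$'s moves depend only on its state and on the counter statuses; a counter that is at least $c$ at the start of a block stays positive throughout it, since at any step of the block it has lost at most $d=c-1$, so its status is constant there and only the net change the block inflicts on it, an integer between $-c$ and $c$, has to be applied in a single step, whereas a counter below $c$ is simulated step by step inside the finite control. Thus $M'$ is a real-time $\revcak{k}$ that may alter each counter by any integer between $-c$ and $c$ in a single step, with $L(M')=L(M)$ and with $M'$ accepting exactly when $M$ does.

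Next I would check that $M'$ is reversible. Since $M$ is reversible, a block can be replayed backwards: from the configuration reached at the end of the block apply $\invdelta$ once — a backward head-moving step that returns the head one cell to the left — and then keep applying $\invdelta$ while it prescribes stationary backward steps, stopping as soon as the next backward step would move the head; this is deterministic, reconstructs the configuration at the start of the block, and again uses at most $d$ backward stationary steps. The bounded counter information carried in the states is updated by the evident inverse rules, and one must verify — by a case distinction analogous to, but simpler than, the one in the proof of Lemma~\ref{lem:add-c} — that the forward and backward updates are mutually inverse; together with the halting conventions this yields a reverse transition function for $M'$.

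Finally, applying Lemma~\ref{lem:add-c} with this $c$ to $M'$ gives an ordinary reversible real-time $\revcak{k}$ equivalent to $M'$, hence to $M$. The hard part will be the reversibility argument: making the block compression, the finite-control bookkeeping, and the halting conventions (at the endmarkers and on over-long stationary runs) backward deterministic all at once, and in particular reproducing and inverting $M$'s behaviour inside a block while keeping only bounded information about the counters that are close to zero; the real-time bound and the concluding appeal to Lemma~\ref{lem:add-c} are routine.
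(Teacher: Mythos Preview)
Your approach is essentially the paper's, just with the two stages swapped: the paper first installs the Lemma~\ref{lem:add-c} representation---counter stores $\lfloor x/(\ell{+}1)\rfloor$, state component stores $x\bmod(\ell{+}1)$---on a machine $M'$ that still simulates $M$ step by step, and only then compresses each block into a single transition to obtain $M''$; because a block changes the represented value by at most $\ell{+}1$, the actual counter of $M''$ changes by at most~$1$, so $M''$ is already an ordinary $\revcak{k}$ and no separate appeal to Lemma~\ref{lem:add-c} is needed at the end. Your ordering also works, but there is a small tension in your write-up: once you carry the Lemma~\ref{lem:add-c} representation in the state as you describe (``the remaining magnitude sits in $M'$'s counter''), your compressed $M'$ is already ordinary and the final invocation of the lemma is redundant; conversely, if $M'$'s counter is to hold $x_i$ directly (so that the extended $\pm c$ ability is genuinely needed), then the status bit alone cannot tell you whether $x_i<c$, so the representation change is forced \emph{before} compression, not after. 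The paper also makes explicit one point you only allude to under ``halting conventions'': in the backward direction a block is replayed as one head-moving step followed by stationary steps, and one must argue this cannot overrun the initial configuration; the paper handles it by observing that a stationary loop on~$\leftend$ through the initial configuration would make $L(M)$ empty and can therefore be excised.
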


\begin{proof}
Let $M$
be a quasi real-time $\revcak{k}$ that never performs more than $\ell\geq 0$
stationary moves consecutively. Clearly, if $\ell=0$ then $M$ does not perform
a stationary move at all and, thus, works in real time. So, we consider
$\ell\geq 1$ in the rest of the proof.

The first step is the construction of an equivalent 
$\revcak{k}$~$M'=\langle Q', \Sigma,k,\leftend,\rightend, \delta', q_0',
F'\rangle$
as in the proof of Lemma~\ref{lem:add-c}, 
that may alter the value of each counter independently
by any integer between $-(\ell+1)$ and $\ell+1$ in a single step.
We let $M'$ simulate $M$ step by step and derive that $M'$ will not 
use its extended abilities, since it will change its counter values 
by at most one in each move. Moreover, $M'$ works still in quasi real-time.

The next step is to speed-up $M'$ to real-time. To this end,
we modify $M'$ to an equivalent real-time 
$\revcak{k}$~$M''=\langle Q', \Sigma,k,\leftend,\rightend, \delta'', q_0',
F'\rangle$. Note that due to its construction, $M'$ knows in each step
whether the represented value of its counter $c_i$ is at least $\ell+1$. If it is strictly
less than  $\ell+1$ then $M'$ knows its exact value, $1\leq i\leq k$.
The purpose of the extended abilities of $M'$ is
to simulate at once a possibly empty sequence of
stationary moves and a possibly subsequent non-stationary step by $M''$.
So,~$M''$ will be able to detect whether its counters can get
empty within the next $\ell+1$ steps. 

By the construction in the proof
of Lemma~\ref{lem:add-c}, a counter value of $M''$ is represented by the sum
of the counter value itself times $\ell+1$ and a state component that stores a number 
from~$\{0,1,\dots,\ell\}$.
So, we have 
$Q'= Q\times \{0,1,\dots, \ell\}^k$, $q'_0=(q_0,(0,0,\dots,0))$,
and $F'= F\times \{0,1,\dots, \ell\}^k$.
Next, the transition function $\delta''$ has to be constructed, in particular, such
that the reversibility of $M'$ is preserved.

\begin{sloppypar}
Given $q\in Q'$, $a\in(\Sigma\cup\{\leftend, \rightend\})$, and
$d_1, d_2,\dots, d_k\in\{+,\empt\}$, 
the transition 
$\delta''(q, a, d_1,d_2,\dots, d_k)$
is defined by the computation $\gamma_1 \vdash_{M'} \gamma_2 \vdash_{M'} \cdots \vdash_{M'} \gamma_n$
of~$M'$ starting on 
$$
\gamma_1=
(q,x,h, c_1, c_2,\dots, c_k)
$$
where $(x,h)=(\lambda,0)$ if $a=\leftend$, and $(x,h)=(a,1)$ otherwise, as
well as $c_i=1$ if $d_i=+$, and $c_i=0$ if $d_i=\empt$, $1\leq i\leq k$.
The computation starts with a possibly empty sequence of stationary moves
on input symbol $x$, followed by a non-stationary move on~$x$. Since $M'$ 
works in quasi real-time this takes at most~\mbox{$\ell+1$} steps.
\end{sloppypar}

First, we construct $\delta''$ for the cases where the computation does not 
halt before step $\ell+1$.
Let $\gamma_n$ be the configuration reached after the 
non-stationary move. Now, assume the state of $\gamma_n$ is $q'$, 
and~$M'$ has altered the represented value of counter $c_i$ by some $-(\ell+1)\leq j_i\leq \ell+1$
(recall that $M'$ changes its represented counter values by at most one in each move).
Then the transition $\delta''(q, a, d_1,d_2,\dots, d_k)$
of~$M''$ to be defined yields
$(q', 1, j_1, j_2,\dots, j_k)$.

Second, assume that the computation halts in configuration $\gamma_n$
before step $\ell+1$, that is, before performing the non-stationary move.
Then the transition $\delta''(q, a, d_1,d_2,\dots, d_k)$
of~$M''$ to be defined yields
$(q', 0, j_1, j_2,\dots, j_k)$.

So, from the construction we obtain that, given an input $w$, 
the computation of $M'$ is unambiguously
split into sequences of steps each of which is performed by $M''$
at once. If $M'$ accepts, so does $M''$ also in cases where
the input is accepted after some stationary moves at the end of the
computation. Conversely, every step of~$M''$ corresponds
to a sequence of steps of $M'$. So, we have 
$L(M') = L(M'')$. Moreover, $M''$ works in real-time.

Finally, the reversibility of $M''$ follows by the reversibility
of $M'$. Since $M'$ is reversible, the computation 
$c_n \invvdashmp c_{n-1}\invvdashmp\cdots \invvdashmp c_1$ is unique.
There remains only one point. In the backwards computation, first
the non-stationary move is simulated followed by some stationary moves.
While this causes no trouble in general, we have to argue that
the computation does not go before the initial configuration
with stationary moves. However, since a loop with stationary
moves on the left endmarker that runs from the initial configuration
to the initial configuration would imply that the accepted language
is empty, we safely may assume that it does not exist. So, 
any stationary transition on the left endmarker that leads to the
initial configuration can safely be removed from $M''$. Note, that
these transitions can be identified by the construction of $M''$.
We conclude that $M''$ is reversible.
\end{proof}

So, the family of languages accepted by quasi real-time $\revcak{k}$
equals the family of languages accepted by real-time $\revcak{k}$.

Next, we turn to the question of whether the property of being reversible
causes weaker computing capabilities for counter automata at all.
It is known that reversible two-counter automata 
that do not have an input tape but receive their inputs suitably
encoded into their counters can simulate Turing machines~\cite{morita:1996:urtcm}.
So, we have to consider counter automata working within time bounds.
Though reversible counter automata are able to accept even non-context-free
languages in real time (Example~\ref{exa:non-cfl}), their reversibility has a drastic
impact on their computational capacities for certain languages.
We will show that there is a regular language not accepted by
any reversible counter automaton (with an arbitrary number of counters) 
with the super-polynomial time complexity $2^{o(n)}$.
To this end, we will use Kolmogorov complexity 
and incompressibility arguments.
General information on this technique
can be found, for example, in the textbook~\cite[Ch.~7]{li:1993:itkca:book}.
Let $w\in \{0, 1\}^*$ be an arbitrary binary string. The Kolmogorov 
complexity $C(w)$ of $w$ is defined to be the minimal size of a 
binary program (Turing machine) describing~$w$. The following key component 
for using the incompressibility
method is well known: there are binary strings~$w$ 
of \emph{any} length such that $|w| \le C(w)$.

\begin{theorem}\label{theo:non-reg}
Let $k\geq 0$ be an integer. There exists a regular language that is not
accepted by any $2^{o(n)}$-time $\revcak{k}$.
\end{theorem}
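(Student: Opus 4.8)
The plan is to exhibit a concrete regular language and derive a contradiction from the assumption that some $\revcak{k}$ accepts it within time $2^{o(n)}$, using an incompressibility argument. A natural candidate is a language of the form $L = \{\, x \dollar y \mid x,y\in\{0,1\}^*, \text{some simple relation between } x \text{ and } y\,\}$; the cleanest choice is probably $L = \{\, w \dollar w \mid w\in\{0,1\}^*\,\}$ or, to keep it regular, something like $\{\, u\dollar v \mid u,v\in\{0,1\}^*, |u|=|v|\,\}$ is \emph{not} regular either, so instead I would take a genuinely regular language that still forces the machine to ``remember'' a long prefix. The candidate I would settle on is $L=\{\, w c w^R \mid \dots\}$-type languages are not regular; the right move is to use a regular language such as $\Sigma^* = \{0,1\}^*$ itself is trivially accepted, so the witness must be subtler: something like $L = \{\, w \in \{0,1\}^* \mid \text{the bit at position } \lfloor |w|/2\rfloor \text{ equals } 1\,\}$ is regular (a DFA can track this with a two-head-simulation... no, one head). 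Actually the standard trick: pick $L = \{\, w \in \{0,1,\dollar\}^* \mid w = u\dollar v,\ u,v\in\{0,1\}^*,\ u \text{ is a prefix of } v \text{ or } v \text{ a prefix of } u\,\}$ — still not regular. The genuinely regular witness that defeats small-space one-way devices is $L = \{\, 1^n 0 w \mid w\in\{0,1\}^*, \text{the } n\text{-th bit of } w \text{ is } 1\,\}$, which is regular, but a one-way machine must essentially count $n$ and then locate bit $n$. I would choose this style of language and make the counting/locating argument precise.

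Given the witness language $L$, the key steps are: (1) Fix a long incompressible string $w\in\{0,1\}^m$ with $m\le C(w)$. (2) Feed the $\revcak{k}$ an input built from $w$ (e.g. a block encoding $m$ in unary followed by $w$, or $m$ copies of structure forcing the machine to carry information about $w$). (3) Use reversibility: because $M$ is backward deterministic, from any configuration the entire past computation is determined, so a configuration reached after consuming the ``query-determining'' portion of the input, \emph{together with the remaining input suffix}, is enough to reconstruct $w$. A configuration of a $\revcak{k}$ on input of length $n$ consists of a state (constant number of bits), a head position ($O(\log n)$ bits), and $k$ counter values each bounded by the running time $2^{o(n)}$, hence each contributing $o(n)$ bits; in total a configuration is describable in $o(n) + O(\log n) + O(1) = o(n)$ bits. (4) If the suffix of the input after the decisive portion is itself short (chosen to be $O(\log m)$ or a constant), then $w$ is reconstructible from $o(m)$ bits of information plus the fixed description of $M$ and the decoding procedure, contradicting $m\le C(w)$ for large $m$.

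Making step (3) airtight is the main obstacle. The delicate point is that a one-way counter automaton running in time $2^{o(n)}$ can make its counter values grow only to $2^{o(n)}$, so $\log$ of a counter value is $o(n)$ — this is exactly where the time bound $2^{o(n)}$ rather than an arbitrary time bound is used, and it must be invoked carefully for \emph{each} of the $k$ counters simultaneously, noting $k$ is a fixed constant so $k\cdot o(n)=o(n)$. Equally delicate: reversibility is what lets us run the computation \emph{backwards} from the configuration at the cut point; without it, knowing the configuration after a prefix would not let us recover the prefix. I would phrase it as: the map sending $w$ to the pair (configuration of $M$ after reading the decisive prefix, remaining suffix) is injective on the incompressible strings, because from that pair one recovers the full accepting computation (forward from the configuration using $\delta$, backward using $\invdelta$) and hence the entire input, hence $w$; and the pair is encodable in $o(m)$ bits. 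A final subtlety is ensuring the decisive prefix is long (linear in $m$) while the suffix is short, so that the configuration's $o(n)=o(m)$ size genuinely beats $m$; choosing the input so that $|w|$ dominates the input length and the suffix is $O(\log m)$ handles this. Wrapping up, for all sufficiently large $m$ this yields $C(w) = o(m) < m \le C(w)$, the desired contradiction, so no such $\revcak{k}$ exists while a trivial DFA (or $\cak{0}$) accepts $L$.
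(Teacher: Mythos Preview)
Your proposal has two genuine gaps. First, the witness you finally settle on, $\{1^n 0 w \mid \text{the } n\text{-th bit of } w \text{ is } 1\}$, is not regular: by Myhill--Nerode the prefixes $1^n 0$ are pairwise inequivalent (the suffix $0^{n-1}1$ separates $1^n0$ from $1^m0$ for $m>n$), and none of the earlier candidates you entertain is both regular and usable for the argument. The paper's witness is $L=((aa+a)(bb+b))^*(aa+a+\lambda)$, which is visibly regular and whose structure is exploited later in the proof.

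Second, your use of reversibility is flawed. You want the configuration after the ``decisive prefix'' (encoded in $o(n)$ bits, hence with the input component omitted) together with a short suffix to determine $w$, arguing that one can ``run backward using $\invdelta$'' to recover the prefix. But $\invdelta$ takes an input symbol as an argument: each backward step requires knowing the symbol currently under the head, and those symbols are precisely the prefix you are trying to reconstruct. Backward determinism says only that \emph{given the input}, each configuration has a unique predecessor; it does not by itself make the map from prefixes to configurations-without-input injective. The paper fills this gap with a language-specific argument: if some $x\neq\varphi(w)$ of the same length reached the same accepting configuration, one runs backward while the suffixes of $x$ and $\varphi(w)$ agree, arrives at a position where one input has $a$ and the other $b$, and then uses the structure of $L$ (appending $bb$ keeps $uabb\in L$ but forces $u'bbb\notin L$, yet both must be accepted from the common configuration) to obtain a contradiction. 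That step, where the particular choice of $L$ does real work, is exactly what your sketch is missing.
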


\begin{proof}
We consider the regular language $L=((aa + a)(bb + b))^*(aa + a + \lambda)$
as witness.
Assume in contrast to the assertion that $L$ is accepted by some 
$\revcak{k}$ $M=\langle Q, \Sigma,k,\leftend,\rightend, \delta, q_0, F\rangle$
in some time $t(n)=2^{o(n)}$.

We choose a word $w \in \{0,1\}^+$ long enough 
such that $C(w)\ge |w|$.
Now, $w$ is encoded as follows.
From left to right the digits are represented 
alternating by $a$'s and $b$'s such that a $0$
is represented by a single letter and a $1$ by
a double letter. For example, the word
$010110$ is encoded as $abbabbaab$. Let~$\varphi(w)$
denote the code of $w$. We have $\varphi(w)\in L$.
Next, we consider the accepting computation on $\varphi(w)$
and show that $w$ can be compressed.

Since $M$ accepts in time $2^{o(n)}$, the maximum number
stored in some counter of $M$ in the accepting computation on $\varphi(w)$
is bounded from above by $2^{o(n)}$. Therefore, 
omitting the second component, each configuration 
$(q,\varphi(w),h, c_1, c_2,\dots, c_k)$
of $M$ can be encoded with 
$$\lceil \log(|Q|)\rceil + \lceil\log(|\varphi(w)|+2)\rceil + k\cdot
o(|\varphi(w)|) 
= o(|\varphi(w)|) = o(|w|)
$$ 
bits. 

Knowing $M$, the length of $\varphi(w)$, and the accepting configuration on 
$\varphi(w)$ without the second
component, $w$ can be reconstructed as follows. For each candidate 
string~$x$ of length~$|\varphi(w)|$, the $\revcak{k}$~$M$ is simulated.

We claim that if the simulation accepts in the accepting configuration 
of $\varphi(w)$ then we have $x=\varphi(w)$ and, thus, decoding $\varphi(w)$ yields $w$.

\begin{sloppypar}
In order to show the claim, assume that $x\neq \varphi(w)$.
Then the computation is run backwards as long
as the suffixes of $\varphi(w)$ and $x$ are identical, thus, reaching some
configurations 
$(q, uzv, |u|+2, c_1, c_2,\dots, c_k)$ and
$(q, u'z'v, |u|+2, c_1, c_2,\dots, c_k)$
with $uzv=\varphi(w)$ and
$u'z'v=x$, $z,z' \in \{a,b\}$, $|u|=|u'|$, and $z\ne z'$. 
We may safely assume that $z=a$ and $z'=b$.
Since
$uzbb=uabb$ belongs to $L$ the computation continuing in
\mbox{$(q, uzbb, |u|+2, c_1, c_2,\dots, c_k)$} ends accepting.
But then the computation continuing in
\mbox{$(q, u'z'bb, |u|+2, c_1, c_2,\dots, c_k)$} is accepting as well.
However the input $u'z'bb=u'bbb$ has to be rejected
since it ends with three $b$'s. This contradiction shows the
claim.
\end{sloppypar}

We conclude that the Kolmogorov complexity of $w$
is $C(w) = o(|w|)+ \lceil \log(|\varphi(w)|)\rceil + \ell = o(|w|)$,
for a positive constant $\ell$ which gives the size of $M$ and the program
that reconstructs~$w$. So, we have $C(w) < |w|$, for $w$ long enough. 
This is a contradiction since $w$ has been chosen such that $C(w) \ge |w|$.
The contradiction shows that $L$ is not accepted by~$M$.
\end{proof}

Since even $\cak{0}$, which are essentially $\dfa$s, can accept all
regular languages, we have separated the computational capacity
of $\cak{k}$ and $\revcak{k}$ for all $k\geq 1$ if they obey the same time
complexity.

\begin{theorem}\label{theo:reg-sep}
Let $k\geq 0$ be an integer. 
The family of languages accepted by $\revcak{k}$
in at most $2^{o(n)}$-time is strictly
included in the family of languages 
accepted by $\cak{k}$ in at most $2^{o(n)}$ time.
\end{theorem}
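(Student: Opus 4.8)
The plan is to obtain this as an immediate corollary of Theorem~\ref{theo:non-reg} together with an essentially trivial inclusion. First I would dispatch the inclusion direction: every $\revcak{k}$ is in particular a $\cak{k}$, since one simply discards the reverse transition function $\invdelta$ and keeps $\delta$, the state set, and the accepting states unchanged. This does not affect the accepted language nor the time complexity, so $\lfam_t(\revcak{k}) \subseteq \lfam_t(\cak{k})$ for every time bound $t$; taking the union over all $t$ with $t(n) = 2^{o(n)}$ preserves the inclusion, which gives the ``$\subseteq$'' part of the claim.

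For strictness I would invoke the witness language $L=((aa + a)(bb + b))^*(aa + a + \lambda)$ used in the proof of Theorem~\ref{theo:non-reg}. That theorem tells us $L$ is accepted by no $2^{o(n)}$-time $\revcak{k}$. It remains to exhibit a $2^{o(n)}$-time $\cak{k}$ for $L$. Since $L$ is regular, it is accepted by some $\dfa$, and a $\dfa$ is exactly the degenerate $k=0$ case of our model, i.e. a $\cak{0}$, running in real time $n+2$; to treat arbitrary $k\geq 0$ one simply appends $k$ counters that are never incremented, decremented, or tested, which changes neither the behaviour nor the real-time bound. As $n+2$ is $O(n)$ and hence $2^{o(n)}$, the language $L$ lies in the $2^{o(n)}$-time family of $\cak{k}$ but not in that of $\revcak{k}$, so the inclusion is strict.

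I do not expect any genuine obstacle here: all the real content is already packed into Theorem~\ref{theo:non-reg}, and what remains is bookkeeping about the model --- noting that a $\dfa$ is the $k=0$ instance of the counter automaton, that unused counters can be added to lift this to any $k$, and that linear running time sits comfortably inside $2^{o(n)}$. The only mild care needed is to phrase the inclusion at the level of the union of all $\lfam_t$ with $t(n)=2^{o(n)}$, so that the fast, irreversible $\cak{k}$ for $L$ and the hypothetical slow $\revcak{k}$ ruled out by Theorem~\ref{theo:non-reg} are being compared within the same time regime.
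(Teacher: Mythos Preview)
Your proposal is correct and follows exactly the approach the paper takes: the paper states this theorem without a separate proof, treating it as an immediate consequence of Theorem~\ref{theo:non-reg} together with the observation (made in the sentence just preceding the theorem) that $\cak{0}$ are essentially $\dfa$s and hence accept all regular languages. Your write-up is simply a more explicit version of this, spelling out the trivial inclusion and the padding with unused counters.
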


Next, we turn to the impact of the number of counters to the computational
capacities of reversible counter automata. Infinite and strict counter
hierarchies for general counter automata working in real-time 
are known for a long time~\cite{fischer:1968:cmacl,laing:1967:racce}.
Generalizations to polynomial and exponential time complexities
have been obtained in~\cite{petersen:2011:sbtbcm}.
However, the hierarchy results~\cite{petersen:2011:sbtbcm}
rely on a stronger acceptance condition. This stronger condition
requires that the computations on \emph{all} inputs have to
respect the time complexity. In particular, this stronger
condition weakens the non-acceptance results. On passing, here
we obtain the known hierarchies also for ordinary counter automata
even for the weaker acceptance condition as used overall in this paper.
In our definition only accepting computations have to obey the
time complexity. To obtain the results, we use once more Kolmogorov
arguments.

Our next counter hierarchy concerns reversible and general
counter automata working in some exponential time. 
In order to define languages that serve as witnesses,
let \mbox{$\varphi\colon \{a,b,\bar{a},\bar{b}\}^*\to \{0,1\}^*$} be the homomorphism
defined through $\varphi(a)=\varphi(\bar{a})=0$ and $\varphi(b)=\varphi(\bar{b})=1$.
Next, we consider all words~$w$ over the alphabet $\{a,b,\bar{a},\bar{b}\}$ as binary
numbers~$\varphi(w)$. The integer represented by $\varphi(w)$ is denoted by
$\eta(\varphi(w))$. 
Let $k\geq 2$ and $j\geq 1$ be integers and $\varphi(w)=z_1z_2\cdots z_{j\cdot
  k} \in\{0,1\}^{j\cdot k}$. 
Then, for all $1\leq i\leq k$, we consider the scattered factors 
\mbox{$v^{(k)}_i(\varphi(w))= z_iz_{k+i}z_{2k+i}\cdots z_{(j-1)k+i}$} of~$\varphi(w)$.
Now, for all $k\geq 2$, we define the language
\begin{multline*}
L_k = \{\, uz_1\dollar^iz_2v \mid j\geq 1, u\in\{a,b\}^{j\cdot k-1}, z_1,z_2\in\{\bar{a},\bar{b}\},
1\leq i\leq k,\\ v\in\{a,b\}^*, \eta(v^{(k)}_i(\varphi(uz_1)))=\eta(\varphi(z_2v)^R)\geq 1\,\}.
\end{multline*}

\begin{proposition}\label{prop:hier-positive}
Let $k\geq 2$ be an integer. The language $L_k$ is accepted by a
$\revcak{k+1}$ with time complexity $O(2^{\frac{n}{k}})$.
\end{proposition}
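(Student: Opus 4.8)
The plan is to construct a $\revcak{k+1}$ that devotes $k$ of its counters, say $C_1,\dots,C_k$, to holding the numbers $\eta(v^{(k)}_i(\varphi(uz_1)))$, $1\le i\le k$ — one per residue class modulo~$k$ — and keeps the last counter $C_{k+1}$ as scratch space for multiplying or dividing a counter by two. Since $|uz_1|=jk\le n$, each $\eta(v^{(k)}_i(\varphi(uz_1)))$ is smaller than $2^{j}\le 2^{n/k}$, which is exactly the budget that makes the required counter arithmetic affordable within time $O(2^{n/k})$. The finite control will additionally track the input position modulo $k$ (so as to check $|uz_1|\equiv 0\pmod k$) and the block structure $u\in\{a,b\}^*$, $z_1\in\{\bar a,\bar b\}$, $\dollar^i$ with $1\le i\le k$, $z_2\in\{\bar a,\bar b\}$, $v\in\{a,b\}^*$, rejecting (halting in a non-accepting state) as soon as a violation is seen. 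To keep the transition table manageable I would first allow the machine to add or subtract $2$ from a counter in one step and then invoke Lemma~\ref{lem:add-c} with $c=2$ to obtain an ordinary $\revcak{k+1}$ of the same time complexity.

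The computation runs in three phases. In \emph{Phase~1} the prefix $uz_1$ is scanned; when the symbol being read is the $m$-th one of residue class $i$ and has $\varphi$-image $b$, the machine replaces $C_i$ by $2C_i+b$ (for $m=1$ it simply sets $C_i$ to $b$). The update $C_i\leftarrow 2C_i+b$ is realised by a block of stationary moves that transfers $C_i$ into $C_{k+1}$ at rate $1{:}2$ (so afterwards $C_{k+1}$ holds twice the old value of $C_i$ and $C_i$ is zero), then transfers $C_{k+1}$ back into $C_i$ at rate $1{:}1$, and finally adds $b$; the pair $(i,b)$ is carried in the finite control during this block. In \emph{Phase~2} the block $\dollar^i$ is scanned, the dollars are counted in the finite control (there are at most $k$, else the machine rejects) so that $i$ becomes known, and it is checked that $C_i$ is positive, which enforces the condition $\eta(\cdot)\ge 1$; the next symbol must be $z_2\in\{\bar a,\bar b\}$. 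In \emph{Phase~3} the suffix $z_2v$ is scanned, and the bits $\varphi(z_2),\varphi(v_1),\varphi(v_2),\dots$ read in this order are precisely the binary digits of $\eta(\varphi(z_2v)^R)$ from least to most significant: on reading such a bit $c$ the machine verifies $c=C_i\bmod 2$ and then replaces $C_i$ by $\lfloor C_i/2\rfloor$, using a block of stationary moves that transfers $C_i$ into $C_{k+1}$ at rate $2{:}1$ (reading off the parity of $C_i$ in the control) followed by a transfer back; a mismatch leads to rejection. On the right endmarker $\rightend$ the machine accepts if and only if $C_i=0$. A geometric-series estimate shows the running time is $O(2^{n/k})$: in Phase~1 the $m$-th doubling in a fixed class costs $O(2^{m})$ and there are at most $j$ of them per class (with $k$ classes), while in Phase~3 the successive halvings of a number below $2^{j}$ cost $O(2^{j})$ altogether, and reading the input contributes only $O(n)$.

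Correctness is immediate: after Phase~1 counter $C_i$ equals $\eta(v^{(k)}_i(\varphi(uz_1)))$ for every $i$, Phase~3 accepts exactly when the bits of $z_2v$ are the binary digits of this number from least to most significant, i.e.\ when $\eta(v^{(k)}_i(\varphi(uz_1)))=\eta(\varphi(z_2v)^R)$, and the positivity test of Phase~2 supplies "$\ge 1$". The step I expect to be the main obstacle is verifying reversibility, that is, exhibiting $\invdelta$ and checking that it inverts $\delta$ configuration by configuration. Using pairwise disjoint state sets for the three phases and for the various stationary-move subroutines makes the phase transitions and the finite-control bookkeeping (position modulo $k$, dollar count, the carried pair $(i,b)$) backward deterministic in a routine way. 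The one delicate case is the stationary loop that doubles or halves a counter: here the point is that the scratch counter $C_{k+1}$ is zero precisely at the moment the loop is entered and positive on every subsequent iteration, so the reverse transition can distinguish "undo one iteration" from "undo the loop entry" using nothing more than the zero/non-zero status of $C_{k+1}$, which is all a counter automaton can observe. Finally, as in the proof of Theorem~\ref{theo:dcak-quasirealtime}, one has to note that no stationary loop leads from the initial configuration back to itself, so that backward determinism is not spoiled at the left endmarker.
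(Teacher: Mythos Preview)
The paper states Proposition~\ref{prop:hier-positive} without proof, so there is no argument of the authors to compare against; your sketch supplies what they left implicit, and it is the natural construction one would expect them to have in mind.

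Your approach is correct. Loading the $k$ scattered-factor values into $C_1,\dots,C_k$ by repeated reversible doubling against the scratch counter $C_{k+1}$, then stripping the bits of $C_i$ from the bottom against the symbols of $z_2v$, is exactly the right idea; the geometric-series time estimate and the format checks in the finite control are routine. Your reversibility discussion also identifies the only genuine subtlety: the stationary doubling/halving loops are backward deterministic because $C_{k+1}$ is zero precisely at loop entry and positive on every intermediate iteration, so the zero test on $C_{k+1}$ suffices to tell ``undo one iteration'' from ``leave the loop''. Using pairwise disjoint state sets for the three phases and for the individual transfer subroutines takes care of the rest.

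Two small points you might make explicit when writing it out. First, the suffix $z_2v$ of an accepted word is not bounded in terms of $j$ (trailing $a$'s in $v$ correspond to leading zeros of $\varphi(z_2v)^R$); once $C_i=0$ each further halving costs $O(1)$, so the extra work is $O(|v|)\le O(n)$, and $n\in O(2^{n/k})$ for fixed $k\ge 2$. You mention the $O(n)$ term but not this last inclusion. Second, your appeal to Lemma~\ref{lem:add-c} with $c=2$ does more than ``keep the transition table manageable'': the construction in that lemma stores each counter value modulo~$c$ in the state, so after applying it the machine can read off $C_i\bmod 2$ directly, which is precisely the parity test you need in Phase~3.
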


In order to prove that $k$ counters are not enough to accept $L_k$ in time
$O(2^{\frac{n}{k}})$, we use again Kolmogorov complexity 
and incompressibility arguments. In particular, $k$ counters are not enough
even for not necessarily reversible counter automata.

\begin{proposition}\label{prop:hier-negative}
Let $k\geq 2$ be an integer. The language $L_k$ is not accepted 
by any $\cak{k}$ with time complexity $O(2^{\frac{n}{k}})$.
\end{proposition}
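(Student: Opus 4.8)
The plan is to argue by contradiction via an incompressibility argument: assume $L_k$ is accepted by some $\cak{k}$ $M$ in time $O(2^{n/k})$, and use $M$ together with a carefully chosen incompressible string to produce a too-short description of that string. First I would fix a long binary word $y$ with $C(y)\geq |y|$, split it into $k$ blocks of equal length $j$, interleave them so that $\varphi(w)=z_1z_2\cdots z_{jk}$ has the property that the $i$-th scattered factor $v^{(k)}_i(\varphi(w))$ equals the $i$-th block of $y$, and set $u z_1$ to be the word over $\{a,b,\bar a,\bar b\}$ coding this, with $z_1\in\{\bar a,\bar b\}$ marking the last position. The key feature of $L_k$ is that after reading the prefix $u z_1$ and the separator $\dollar^i$, the automaton must check that a specific integer, namely $\eta(v^{(k)}_i(\varphi(uz_1)))$ — which is essentially the $i$-th block of $y$ read as a binary number — matches the reversal of the binary number coded by the remaining suffix $z_2 v$.

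The core of the argument is an information-theoretic bottleneck on the configuration reached after $M$ has consumed $u z_1 \dollar^i z_2$, i.e. just before reading the free suffix $v\in\{a,b\}^*$. Call this configuration $C_i$. Because $M$ runs in time $O(2^{n/k})$ on accepting inputs, each counter holds a value bounded by $2^{O(n/k)}$, so $C_i$ (minus the fixed input, which is determined once we also know its length) is describable by $O(\log n)$ bits for the state and head position plus $k\cdot O(n/k)=O(n)$ bits for the counters — but crucially the constant in the $k\cdot O(n/k)$ term is the \emph{same} $O(1)$ per counter. The point is that the suffix $v$ chosen to satisfy the membership condition for separator length $i$ is forced: given $C_i$, one can reconstruct the unique integer that $M$ will accept, hence recover the $i$-th block of $y$. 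Summing over $i=1,\dots,k$, the tuple $(C_1,\dots,C_k)$ determines all of $y$. So $|y|\le C(y)\le \sum_{i=1}^k (\text{size of }C_i) + O(\log n) + O(1)$. If each $C_i$ needed roughly $n/k$ bits per counter times $k$ counters, this would be $\approx k\cdot n$ bits and give nothing; the whole game is to show the counters of $C_i$ carry at most about $n/k$ bits of \emph{useful, $i$-dependent} information, so that the total is $\approx n/k \cdot k = n$ with a deficit, contradicting $C(y)\ge|y|$.

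To extract that saving I would exploit determinism and the structure of $L_k$ more sharply: run the accepting computation on the chosen input \emph{backwards} (or analyze forward) to locate, for each $i$, the moment $M$ finishes the $\dollar^i$ block; observe that the counter contents at that moment, modulo the bounded amount of information the finite control can carry, must encode the integer $\eta(v^{(k)}_i(\varphi(uz_1)))$ — otherwise $M$ could not later verify the equality against an arbitrary admissible suffix $v$, and a standard exchange argument (replace $v$ by the suffix that witnesses membership for a \emph{different} block value while keeping the same prefix) yields two inputs on which $M$ behaves identically but exactly one of which is in $L_k$. The quantitative form of this exchange argument shows the counters at configuration $C_i$ pin down the $i$-th block of $y$ up to $O(\log n)$ bits, and since the $k$ blocks are disjoint pieces of the incompressible $y$, we get $|y|\le \sum_{i=1}^k O(\log n) + O(\log n) + O(1) = O(k\log n) = O(\log n)$, far below $|y|$.

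The main obstacle, and the step needing the most care, is making the exchange argument quantitatively tight enough across all $k$ separator lengths simultaneously: one must argue that the configurations $C_1,\dots,C_k$ occurring on the \emph{single} accepting computation of the single chosen input collectively determine all $k$ blocks of $y$, rather than re-choosing the input for each $i$. The clean way is to define the witness input for one fixed $i$, derive that $C_i$ encodes block $i$, and note that the \emph{prefix} $uz_1$ — hence via the scattered-factor structure \emph{all} blocks of $y$ — is already recoverable from $C_i$ alone once we know $M$ and $|uz_1|$, because $M$'s counters after reading $\dollar^i$ are a deterministic function of that prefix and the bounded control cannot have discarded the information needed for the later mandatory comparison. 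Thus one good choice of $i$ (say $i=1$) suffices, and the counter bound $2^{O(n/k)}$ together with $|uz_1|=jk+O(1)$, $n=\Theta(jk)$ forces $C(y)=O(n/k)+O(\log n) < jk \le |y|$ for $j$ large, the desired contradiction. I would double-check the constants hidden in the $O(2^{n/k})$ time bound versus the $jk$-bit length of $y$, since that is exactly where the factor-of-$k$ gap between "$k$ counters" and "$k+1$ counters" (Proposition~\ref{prop:hier-positive}) is being exploited.
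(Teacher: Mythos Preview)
Your proposal has a real gap in the bit-count, and it is exactly the place where the separation between $k$ and $k+1$ counters lives. A configuration of a $\cak{k}$ after reading the prefix $uz_1$ has $k$ counters, each bounded by $c\cdot 2^{n/k}$ from the time bound; encoding all of them therefore costs $k\cdot O(n/k)=O(n)$ bits, which is the same order as $|u'|=jk=\Theta(n)$. Your final line ``$C(y)=O(n/k)+O(\log n)$'' silently drops a factor of $k$ and is not justified by anything you wrote; with the correct count you get no contradiction. The ``exchange argument'' and the claim that ``the bounded control cannot have discarded the information'' do not help here: determinism only says the map from prefix to configuration is a function, not that it is injective, and your informal notion of ``useful, $i$-dependent information'' never turns into a bound sharper than the naive $k\cdot O(n/k)$.

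The missing ingredient, and the heart of the paper's proof, is a quantitative lemma of Petersen: in any \emph{accepting} computation of a $\cak{k}$ with $|Q|$ states, immediately after reading a prefix of length $m$, \emph{at least one} counter has value at most $(|Q|+1)\cdot m$. So one of the $k$ counters costs only $O(\log|u'|)$ bits rather than $O(n/k)$ bits. The paper records the single configuration right after $M$ leaves $z_1$ (not one configuration per $i$), encodes it in $(k-1)\cdot\frac{n}{k}+O(\log|u'|)+O(1)$ bits, and then reconstructs \emph{all} $k$ scattered factors from that one configuration by forward-simulating $M$ on $\dollar^i z_2 v$ for every $i$ and every candidate $v$. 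A short calculation with $n\le jk+j+k$ then shows this encoding is strictly shorter than $|u'|$, contradicting incompressibility. Without the Petersen lemma (or an equivalent device that kills one counter's contribution), the argument cannot close.
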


\begin{proof}
Assume contrarily that $L_k$ is accepted by some 
$\cak{k}$ $M=\langle Q, \Sigma,k,\leftend,\rightend, \delta, q_0, F\rangle$
with time complexity $O(2^{\frac{n}{k}})$.

We choose some integer $j\geq 1$ large enough
and a word $u'\in \{a,b\}^{j\cdot k}$
such that $C(u')\ge |u'|$.
Next, we consider the computation of $M$
on the prefix $uz_1$, where $uz_1$ is essentially
$u'$ but with the last symbol barred. In particular,
we consider the configuration reached exactly after
$M$ has moved its input head from $z_1$ to the right.

Set $\ell=\frac{n}{k}$.
Since $M$ accepts in time $O(2^{\frac{n}{k}})$
the maximum number stored in some counter of $M$ 
in the configuration is at most $c\cdot 2^\ell$, for some
constant $c\geq 0$.
Moreover, in~\cite{petersen:2011:sbtbcm} it is shown that in an accepting computation
if a counter machine has $|Q|$ states then immediately after reading a
prefix $u'$ of its input the value of at least one counter 
is bounded from above by $(|Q|+1)\cdot |u'|$. Since this could be each
of the $k$ counters, the configuration in question without its second and
third component
can be encoded with
\begin{eqnarray*}
\lceil \log(|Q|\cdot (c\cdot 2^\ell)^{k-1} \cdot k\cdot (|Q|+1)\cdot
|u'|)\rceil 
&\leq& c'+ \ell(k-1) + \log(|u'|)\\
&=&  c' + \frac{n}{k}(k-1) + \log(|u'|)
\end{eqnarray*}
bits, for some constant $c'\geq 0$.
Since $n \leq j \cdot k + j + k$, we have $\frac{n-k}{k+1}k \leq j\cdot k =
|u'|$ and conclude
\begin{eqnarray*}
c' + \frac{n}{k}(k-1) + \log(|u'|) 
&\leq& c'' + \left(\frac{|u'|(k+1)}{k}+k\right)\frac{k-1}{k} + \log(|u'|)\\
&=& c'' + \frac{|u'|(k+1)(k-1)}{k^2}+ \frac{(k-1)k}{k} + \log(|u'|)\\
&=& c'' +|u'|- \frac{|u'|}{k^2} + k-1 + \log(|u'|)\\
&<& |u'|
\end{eqnarray*}

Knowing $M$, the length of $u'$, and the configuration in question  
without its second and
third component, $u'$ can be reconstructed as follows.
In order to reconstruct $u'$ it is sufficient to reconstruct
the $k$ scattered factors $v^{(k)}_i(\varphi(u'))$. 
For each of these factors, we test all non-empty candidates
from $\{a,b\}^{\leq \frac{|u'|}{k}}$. A candidate $v$ is tested
by replacing its first symbol by the barred version, preceding it by
$\dollar^i$, and feeding it to $M$. Now it is sufficient to
simulate $M$ on this input starting in the encoded configuration.
The simulation can be made halting due to the upper bound of the time
complexity. So, if $M$ accepts then $v^R$ preceded with leading $a$'s
to obtain the length $\frac{|u'|}{k}$ gives the scattered factor $v^{(k)}_i(\varphi(u'))$.
Moreover, there is an accepted candidate for each $i$. Therefore, 
the reconstruction terminates.

We conclude that the Kolmogorov complexity of $u'$
is bounded from above by a positive constant 
which gives the size of $M$ and the program that reconstructs~$u'$
plus $\log(|u'|)$ plus the size of the encoding of the configuration. In total this
is strictly less than $|u'|$.
However, this is a contradiction since $u'$ has been chosen such 
that $C(u') \ge |u'|$.
The contradiction shows that $L_k$ is not accepted by~$M$
with time complexity $O(2^{\frac{n}{k}})$.
\end{proof}

So, we have a strict counter hierarchy for reversible as well as irreversible
counter automata obeying the exponential time complexity~$O(2^{\frac{n}{k}})$.
However, to compare reversible counter automata with some $k+1$ counters
to irreversible counter automata with $k$ counters, we cannot utilize
the regular language provided by Theorem~\ref{theo:non-reg}.
This language is not accepted by any $2^{o(n)}$-time $\revcak{k}$.
So, counter hierarchies for polynomial time complexities can be considered.
Such hierarchies are known to exist for irreversible counter automata
that have to respect the time complexity on all inputs, that is, also
on inputs \emph{not accepted}~\cite{petersen:2011:sbtbcm}.
Here we just mention that this result can be improved by Kolmogorov
arguments, such that only accepting computations have
to respect the time complexity. 
The witness languages are modifications of the languages $L_k$ used to
show Proposition~\ref{prop:hier-positive} and 
Proposition~\ref{prop:hier-negative}.
The relations between language families are depicted in
Figure~\ref{fig:pic-incl}.

\begin{figure}[!ht]
  \centering
  \includegraphics[scale=.7]{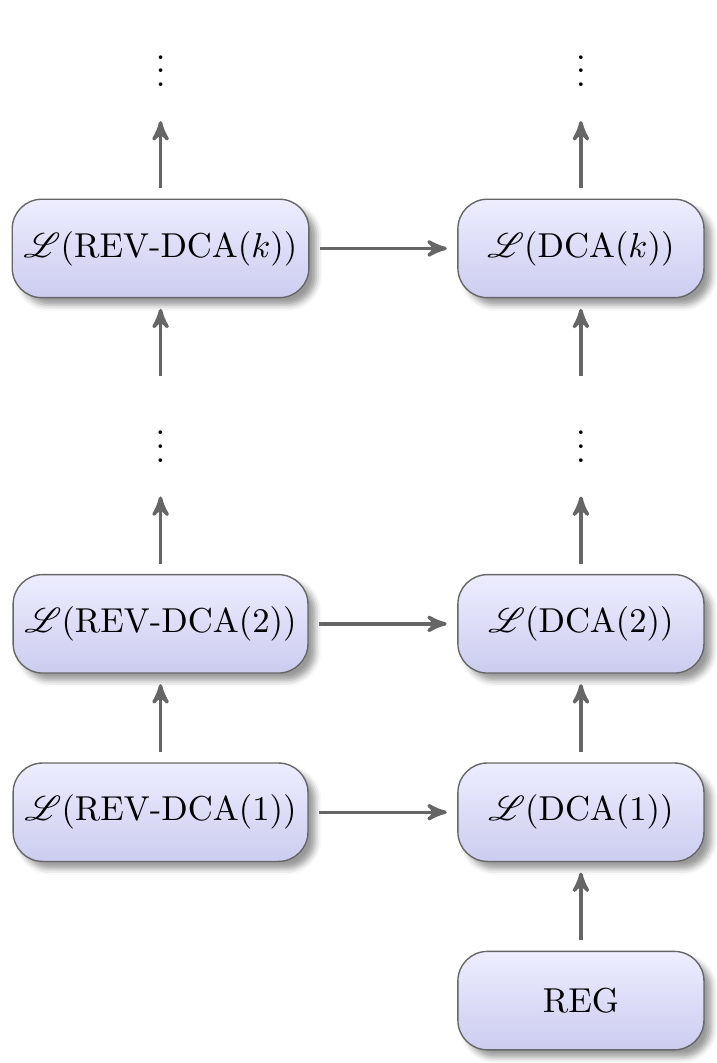} 
  \caption{Relationships between language families induced by
counter automata. An arrow between families indicates a strict
inclusion. Whenever two families are not connected by a path they
are incomparable.}\label{fig:pic-incl}
\end{figure}

\section{Decidability Problems}\label{sec:deci}

In this section, we study decidability problems for real-time $\revcak{k}$ with $k \ge 2$
and it turns out at all usually studied questions such as emptiness, finiteness,
inclusion, equivalence, or regularity are not semidecidable.
We recall (see, for example, \cite{Hopcroft:1979:itatlc:book}) that a decidability
problem is \emph{semidecidable} (\emph{decidable}) if and only if the set of all
instances for which 
the answer is `yes' is recursively enumerable (recursive). 
Clearly, any decidable problem is also semidecidable, while the converse does
not generally hold.
The non-semidecidability results are shown by reduction
of the emptiness problem of Turing machines.
It is well known that emptiness for such machines is not semidecidable
(see, for example,~\cite{Hopcroft:1979:itatlc:book}).

The technique to obtain this reduction is based on the notion
of \emph{valid computations of multiplying counter machines} where we are
following an idea and the notation given in~\cite{Hartmanis:1970:wmsltpu}.
Valid computations of a multiplying counter machine are, basically, histories of 
multiplying counter machine computations which are encoded into single words.
It will be shown that the set of such suitably formatted valid
computations can be represented as the intersection of two languages
accepted by real-time $\revcak{1}$. Since it is known~\cite{Hartmanis:1970:wmsltpu}
that every language accepted by a Turing machine can be accepted,
suitably encoded, by some multiplying counter machine as well, the emptiness
of a Turing machine can be reduced to the emptiness of the intersection
of two languages accepted by real-time $\revcak{1}$s. 
{F}rom this the non-semidecidability of inclusion for real-time $\revcak{1}$s and
the non-semidecidability results for real-time $\revcak{k}$ with $k \ge 2$
can be derived. 

We start by summarizing the necessary notations given in~\cite{Hartmanis:1970:wmsltpu}.
A \emph{multiplying counter machine} is a one-register machine (the register is capable of holding
an arbitrary integer) which can multiply the content of its register by one of a
finite number of multiplicands, and branching if the resulting product is not an integer.
Formally, let $Q$ be a finite set of states with two distinguished elements
$q_0 \neq q_f$, called the initial and final state.
Let $C=\{2,3,5,7,1/2,1/3,1/5,1/7\}$ be a finite set of multiplicands. 
A transition rule is an element of $Q \times C \times Q \times Q$.
A multiplying counter machine~$M$ is a set of transition rules such that no two transition
rules have the same
first component and no transition rule of~$M$ has $q_f$ as the first component nor $q_0$ as
the last component. A configuration of $M$ is string of the form $qa^i$, where $q \in Q$. 
For each integer~$n$, we write $qa^n \vdash pa^{kn}$, if $(q,k,p,r)$ is a transition rule of $M$
and $kn$ is an integer. We write $qa^n \vdash ra^{n}$, if $(q,k,p,r)$ is a transition rule of $M$
and $kn$ is not an integer. A valid computation is a string built from a
sequence of configurations passed through during a computation halting in
state $q_f$.

\begin{theorem}[\cite{Hartmanis:1970:wmsltpu}]\label{theo:Hartmanis}
Let $L \subseteq \{1,2\}^*$ be a set of strings accepted by a Turing machine
and let each string $x_1x_2 \cdots x_n \in \{1,2\}^*$ be encoded as a natural number
$i=x_1 + 3x_2 + 3^2x_3 + \cdots + 3^{n-1}x_n$.
Then, a multiplying counter machine $M$ can effectively be constructed such that
$q_0a^{2^i} \vdash^* q_fa^j$ (for some $j \ge 0$) if and only if $i=x_1 + 3x_2 + 3^2x_3 + \cdots + 3^{n-1}x_n$
and $x_1x_2 \cdots x_n$ is in $L$.
\end{theorem}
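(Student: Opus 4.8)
The statement is the one‑register form of Minsky's counter‑machine simulation of Turing machines \cite{Minsky:1961:rupptotttm}, and the plan is to split the construction into three layers: a counter‑machine $N$ that simulates $T$, a \emph{G\"odel‑style packing} of the counters of $N$ into the single register of a multiplying counter machine $M$, and a small reconciliation of the two input conventions.

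First I would invoke Minsky's theorem in the following shape: from a Turing machine $T$ accepting $L\subseteq\{1,2\}^*$ one builds a deterministic counter machine $N$ with at most four counters that, started with its first counter holding an integer $m$ and all other counters empty, reaches a distinguished state exactly when $m=x_1+3x_2+\cdots+3^{n-1}x_n$ for some string $x_1\cdots x_n$ over $\{1,2\}$ and that string lies in $L$. The encoding prescribed in the statement is precisely a base‑$3$ G\"odel code of the initial tape of $T$, so essentially no decoding is needed: $N$ reads its first counter as the number represented by the right part of the tape, keeps a second counter for the reversed left part, uses the remaining two counters as scratch to perform the multiplications and divisions by small constants that realize one step of $T$ (this is exactly why four primes, $2,3,5,7$, are the right choice), and runs a preliminary pass that never enters the distinguished state when the base‑$3$ expansion of $m$ contains a digit $0$, i.e.\ when $m$ codes no admissible string.

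Second I would translate $N$ into a multiplying counter machine $M$ by packing the counters $c_1,\dots,c_r$ of $N$ ($r\le4$) into the single register as $2^{c_1}3^{c_2}5^{c_3}7^{c_4}$, unused primes carrying exponent $0$. An increment of $c_j$ becomes the rule $(q,p_j,p,p)$ that multiplies the register by the $j$th prime $p_j\in\{2,3,5,7\}$ and always yields an integer; a decrement of $c_j$ coupled with a zero test becomes the rule $(q,1/p_j,p,r)$, whose product is an integer exactly when $p_j$ divides the current register value, that is, when $c_j>0$, so control passes to $p$ with $c_j$ decreased and to $r$ with the register unchanged otherwise. Thus each instruction of $N$ compiles to a single rule of $M$, and the initial register value $2^i$ of the statement is literally the packed encoding of the initial configuration of $N$ with first counter $i$.

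It then remains to identify the initial state of the compiled machine with $q_0$, the state signalling acceptance of $T$ with $q_f$ (made a sink), and to discard the finitely many other halting states of $N$ and reroute any transition leading back to $q_0$, so that $q_0a^{2^i}\vdash^* q_fa^j$ for some $j$ holds exactly when $N$ started with first counter $i$ reaches the distinguished state, hence exactly when $i=x_1+3x_2+\cdots+3^{n-1}x_n$ for some $x_1\cdots x_n\in L$. I expect the main obstacle to lie in Minsky's core simulation inside $N$ rather than in the prime packing, which is a transition‑by‑transition relabelling: the delicate part is realizing one step of $T$ as a sequence of multiply/divide‑by‑a‑constant operations that branch on zero, and threading the well‑formedness check on $m$ and the acceptance convention through the construction so that precisely the intended configurations — and no spurious halting configuration — witness $q_0a^{2^i}\vdash^* q_fa^j$. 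Everything else is mechanical.
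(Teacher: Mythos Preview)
The paper does not give its own proof of this theorem: it is quoted verbatim as a result of Hartmanis and Hopcroft~\cite{Hartmanis:1970:wmsltpu} and used as a black box in Section~\ref{sec:deci}. So there is no ``paper's proof'' to compare against; what you have written is a reconstruction of the original argument.

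As such a reconstruction, your outline is correct and follows the classical route. The three layers you describe --- a multi-counter simulation of $T$ in the spirit of Minsky, the prime-power packing $2^{c_1}3^{c_2}5^{c_3}7^{c_4}$ of the counters into a single register, and the clean-up of initial/final states --- are exactly how the result is obtained in~\cite{Hartmanis:1970:wmsltpu}. Your translation of increments into rules $(q,p_j,p,p)$ and of decrement-with-zero-test into rules $(q,1/p_j,p,r)$ matches the semantics of multiplying counter machines as the paper states them, and your observation that the stipulated start value $2^i$ is precisely the packed form of ``first counter $=i$, all others $=0$'' is the right way to tie the encoding in the statement to the construction. The well-formedness check you mention (rejecting $m$ whose base-$3$ expansion contains a $0$) is a point the original paper handles implicitly via the encoding convention, but making it explicit does no harm. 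In short: your plan is sound, and there is nothing in the present paper to compare it to beyond the citation.
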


Now, a valid computation of a multiplying counter machine~$M$ ($\valc(M)$) is defined 
in~\cite{Hartmanis:1970:wmsltpu} as a string 
$$
q_0aq_0a^2q_0a^4 \cdots q_0a^{2^{i-1}}\alpha_0\alpha_1 \cdots \alpha_n
$$
where $\alpha_0, \alpha_1, \ldots, \alpha_n$ are configurations of~$M$
such that $\alpha_0 = q_0a^{2^i}$, $\alpha_n = q_fa^{p}$ for some $p\ge 1$, 
and $\alpha_j \vdash \alpha_{j+1}$ for $0 \le j < n$. 

Our goal is to represent the set of valid computations as the intersection of two
languages where each of which is accepted by a 
real-time \emph{reversible} one-counter machine. 
Hence, we consider several modifications to the set of valid computations
to enable a reversible computation.
First, we mark the $a$ in the first block, that is, $q_0a$ is replaced by $q_0a'$.
Second, we assume that $n+i+1$ is even, that is,
every valid computation consists of an even number of configurations. 
If $n+i+1$ is odd, then we consider a new state $q_{f'}$, replace
$\alpha_n=q_f a^p$ by $\alpha_n=q_{f'} a^p$, add a configuration
$\alpha_{n+1}=q_f a^p$, and obtain an even number of configurations.

Third, we add the state of each configuration to the end of a configuration.
That is, $\alpha_j=q_ja^{n_j}$ is modified to $\alpha_jq_j=q_ja^{n_j}q_j$ and
for the initial phase of doubling $a$-blocks we modify $q_0 a^{2^p}$ ($p \ge 0$)
to $q_0' a^{2^p}q_0'$, where $q_0'$ is a new state not in $Q$. 

The fourth modification concerns configurations 
$qa^n \vdash pa^{kn}$, if $(q,k,p,r)$ is a transition rule of $M$
and $kn$ is an integer. In this case, we store the information of the multiple $k$ as superscript $(k)$ 
in the state $p$ of the successor configuration. In addition,
we store in state~$q$ as subscript the result of $n\bmod (1/k)$, if $k<1$.
That is, $qa^nq pa^{kn}p$ is modified to 
$qa^nq_{(n \bmod (1/k))} p^{(k)}a^{kn}p^{(k)}$ and $q_0'a^{2^{i-1}}q_0' q_0a^{2^i}q_0$ is modified to 
$q_0'a^{2^{i-1}}q_0'q_0^{(2)}a^{2^i}q_0^{(2)}$.  

The fifth modification concerns configurations $qa^n \vdash ra^{n}$, if $(q,k,p,r)$ is a transition rule of~$M$
and $kn$ is not an integer. Hence, $k \in \{1/2,1/3,1/5,1/7\}$ and we store the 
information $(1)$ as superscript in the state $r$ of the successor configuration. In addition,
we store in state~$q$ as subscript the result of $n \bmod (1/k)$.
That is, $qa^nq ra^{n}r$ is modified to 
$qa^nq_{(n \bmod (1/k))} r^{(1)}a^{n}r^{(1)}$.

Finally, if the state $q_{f'}$ has been introduced in the second modification,
we modify $q_{f'}a^pq_{f'} q_{f}a^{p}q_{f}$ to 
$q_{f'}a^pq_{f'} q_{f}^{(1)}a^{p}q_{f}^{(1)}$.

Formally, the set $\valc'(M)$ of valid computations of a multiplying counter machine~$M$
is the set of strings. Let $i=x_1 + 3x_2 + 3^2x_3 + \cdots + 3^{n-1}x_n$ be the unary encoding
of a string $x_1x_2\cdots x_n \in L$. 
If $i >0$, we have
$$
q_0'a'q_0'q_0'a^2q_0' \cdots q_0'a^{2^{i-1}}q_0'q_0^{(\ell_0)}a^{n_0}q^{(\ell_0)}_{0,\varphi_0}
q_1^{(\ell_1)}a^{n_1}q^{(\ell_1)}_{1,\varphi_1} \cdots
q_j^{(\ell_j)}a^{n_j}q^{(\ell_j)}_{j,\varphi_j} 
q_{j+1}^{(\ell_{j+1})}a^{n_{j+1}}q^{(\ell_{j+1})}_{j+1,\varphi_{j+1}} \cdots 
q_f^{(\ell_n)}a^{n_n}q_f^{(\ell_n)}
$$
If $i=0$, we have
$$
q_0^{(\ell_0)}a'q^{(\ell_0)}_{0,\varphi_0}
q_1^{(\ell_1)}a^{n_1}q^{(\ell_1)}_{1,\varphi_1} \cdots
q_j^{(\ell_j)}a^{n_j}q^{(\ell_j)}_{j,\varphi_j} 
q_{j+1}^{(\ell_{j+1})}a^{n_{j+1}}q^{(\ell_{j+1})}_{j+1,\varphi_{j+1}} \cdots 
q_f^{(\ell_n)}a^{n_n}q_f^{(\ell_n)}
$$
where $q_j a^{n_j}$ ($1 \le j \le n$) are configurations of~$M$
such that $n_0=2^i$ and $q_j a^{n_j} \vdash q_{j+1} a^{n_{j+1}}$ for $0 \le j < n$. 
Furthermore, due to the definition of~$M$ we know that for $0 < j \le n$ 
each state $q_j$ different from $q_f$ and $q_{f'}$ has an associated number $k_j \in C$. If $k_j>1$ or, if $k_j<1$ and $n_j$ is divisible by $1/k_j$,
we define $\ell_j=k_{j-1}$ and $\ell_j=1$ otherwise. Moreover, $\ell_0=2$.
For $0 \le j < n$ we define $\varphi_j=n_j \bmod (1/k_j)$, if $k_j<1$, and $\varphi_j=0$ otherwise.
We illustrate the definition with the following example.

\begin{example}
We consider some valid computation $q_0aq_0a^2q_0a^4q_0a^8q_0a^{16}q_1a^{32}q_2a^{16}q_3a^{16}q_4a^{8}q_fa^{8}$ 
of a multiplying counter machine $M$ with transitions
$(q_0,2,q_1,\cdot)$, $(q_1,\frac{1}{2},q_2,\cdot)$, $(q_2,\frac{1}{3},\cdot,q_3)$,
$(q_3,\frac{1}{2},q_4,\cdot)$, and $(q_4,\frac{1}{5},\cdot,q_f)$. States not relevant for the example are denoted by $\cdot$. According to the above discussion we obtain
$$q_0'a'q_0'q_0'a^2q_0'q_0'a^4q_0'q_0'a^8q_0'q_0^{(2)}a^{16}q_{0,0}^{(2)}q_1^{(2)}a^{32}q_{1,0}^{(2)}q_2^{(\frac{1}{2})}a^{16}q_{2,1}^{(\frac{1}{2})}q_3^{(1)}a^{16}q_{3,0}^{(1)}q_4^{(\frac{1}{2})}a^{8}q_{4,3}^{(\frac{1}{2})}q_f^{(1)}a^{8}q_f^{(1)}.$$
\eoe
\end{example}

Our next goal is to represent the set $\valc'(M)$ of such modified valid computations
as the intersection of two languages $\valc'_1(M)$ and $\valc'_2(M)$ that are accepted
by real-time $\revcak{1}$s. To this end, we define $\valc'_1(M)$  
to be the set of strings that start with $q_0'a'q_0'$
or $q_0^{(\ell_0)}a'q^{(\ell_0)}_{0,\varphi_0}$, end with $q_f^{(k)} a^* q_f^{(k)}$
($k \in C \cup \{1\}$),
have no $a'$ or $q_f^{(k)}$ in between, 
and we require that the successor configuration of any configuration
at an odd position is correctly computed.
Similarly, $\valc'_2(M)$ is the set of strings that have the same format as $\valc'_1(M)$
and the successor configuration of any configuration at an even position is correctly
computed.
Due to the required format of $\valc'_1(M)$ and $\valc'_2(M)$
we obtain that $\valc'_1(M) \cap \valc'_2(M)=\valc'(M)$. Moreover, we have
that $\valc'(M)$ is empty if and only if~$M$ accepts the empty set.

\begin{lemma}\label{lem:valc-acc-counter1}
Let $M$ be a multiplying counter machine. Then real-time 
$\revcak{1}$s accepting the sets $\valc'_1(M)$ 
and $\valc'_2(M)$ can effectively be constructed from $M$.
\end{lemma}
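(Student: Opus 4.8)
The plan is to construct explicitly, from the multiplying counter machine $M$, a real-time $\revcak{1}$ accepting $\valc'_1(M)$; the machine for $\valc'_2(M)$ will then be obtained by an entirely analogous construction. It is convenient to first build a one-counter machine that is allowed to add to or subtract from its counter any integer of absolute value at most a fixed constant (it suffices to take $\max\{\,1/k\mid k\in C\,\}$), and only at the end to invoke Lemma~\ref{lem:add-c}, which turns it into an ordinary $\revcak{1}$ of the same time complexity without destroying reversibility.

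The automaton scans the input once from left to right, moving the head on every step, so it works in real time. Its finite control checks the global shape of the word: it must begin with $q_0'a'q_0'$ (or with the block for $\alpha_0$ if $i=0$), end with a block $q_f^{(k)}a^+q_f^{(k)}$ that is immediately followed by $\rightend$, contain no $a'$ and no $q_f^{(k)}$ strictly in between, and be a concatenation of well-formed blocks --- those of the doubling phase of the form $q_0'a^+q_0'$, the remaining ones of the form $q^{(\ell)}a^+q^{(\ell)}_\varphi$ with matching leading and trailing markers. In addition, for every configuration $\beta$ occurring at an odd position the automaton verifies that the block $\beta'$ following it is the (correctly annotated) $M$-successor of $\beta$. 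Since in $\valc'_1(M)$ the checked pairs $(\beta_1,\beta_2),(\beta_3,\beta_4),\dots$ are consecutive and non-overlapping, the single counter will be back to $0$ exactly at the boundary between any two of them, so one counter is enough: while reading $\beta=q^{(\ell)}a^{n}q^{(\ell)}_\varphi$ the automaton raises the counter from $0$ to $k\cdot n$ if the multiplicand $k$ attached to the state of $\beta$ is $\ge 1$, and to $n$ while keeping $n\bmod(1/k)$ in its control if $k<1$; then, after reading the leading marker $s^{(\sigma)}$ of $\beta'$ and checking that $s$ and $\sigma$ are consistent with the transition rule of $M$ for $\beta$ and with the stored residue, it lowers the counter back towards $0$ over the $a$-block of $\beta'$ (subtracting $1$ per symbol after a multiply-up or a stuttering step, and $1/k$ per symbol after a divide step), and accepts the pair only if the counter equals $0$ when the trailing marker of $\beta'$ is reached. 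That the resulting automaton is deterministic and recognizes exactly $\valc'_1(M)$ is then routine.

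The delicate point --- and the place where the modifications made to $\valc(M)$ do their work --- will be to show that this automaton is reversible, i.e.\ that its step relation is injective and its inverse is again realized by a one-counter machine. Inside the $a$-block of a configuration the counter is strictly monotone, so distinct configurations there have distinct images and the backward step is forced; all the risk is concentrated at the markers. Running the computation backwards, when the automaton re-reads a trailing marker $q^{(\ell)}_\varphi$ it recovers from the subscript $\varphi$ exactly the residue $n\bmod(1/k)$ that the forward machine had computed and then discarded, which is what it needs to re-enter the $a$-block in the right control state; when it re-reads a leading marker $s^{(\sigma)}$ it recovers the multiplicand $\sigma$ of the incoming step (hence which of the three subtraction regimes to undo) together with the successor state $s$, and these are precisely the data that the forward machine had carried along in order to perform its consistency check, so the ``between-configuration'' control states can be reconstructed too; and at a marker the distinction between ``just entered a block'' and ``wrapped around inside a block'' is settled by whether the counter is zero or positive, which excludes the standard obstruction to reversibility for one-counter machines (two predecessors differing only in a counter value $0$ versus $1$). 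Writing out the reverse transition function and verifying injectivity marker by marker is the main, and the only really technical, part of the proof. For $\valc'_2(M)$ one reads the first configuration with the counter idle and then verifies the consecutive pairs $(\beta_2,\beta_3),(\beta_4,\beta_5),\dots$, and the format checks and the reversibility analysis carry over with the pair structure merely shifted.
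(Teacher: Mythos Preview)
Your plan is correct and follows the same overall strategy as the paper: split the word into consecutive non-overlapping pairs of configurations, verify each pair with a single counter that returns to~$0$ at the pair boundary, and rely on the superscripts~$\ell$ and subscripts~$\varphi$ annotating the markers to make the marker transitions reversible. The one genuine difference is the technical device you use to cope with the multiplication and division. The paper keeps the counter increments at $\pm 1$ throughout and absorbs the factor~$1/k$ in the divide case by performing $1/k-1$ stationary moves on each~$a$; it therefore first builds a \emph{quasi} real-time $\revcak{1}$ and only afterwards invokes Theorem~\ref{theo:dcak-quasirealtime} to speed it up to real time. You instead keep the head moving on every step and let the counter change by up to~$7$ in a single move (adding~$k$ on the first block when $k>1$, subtracting~$1/k$ on the second block after a divide), and then appeal directly to Lemma~\ref{lem:add-c} to obtain an ordinary $\revcak{1}$. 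Since Theorem~\ref{theo:dcak-quasirealtime} is itself proved via Lemma~\ref{lem:add-c}, your route is slightly more direct; the paper's route has the minor advantage that the intermediate automaton already obeys the $\pm 1$ constraint and only the time bound needs adjusting. Either way, the reversibility analysis is the same and hinges on exactly the observations you make: the trailing marker of the second block supplies~$\ell_{j+1}$ and hence the subtraction regime to undo, the trailing marker of the first block supplies $q_j,\ell_j,\varphi_j$ and hence the addition regime and residue to restore, and the counter being zero exactly at pair boundaries disambiguates the two roles a marker symbol can play.
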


\begin{proof}
We describe the construction of a real-time $\revcak{1}$ accepting the set~$\valc'_1(M)$.
A real-time $\revcak{1}$ accepting the set~$\valc'_2(M)$ can similarly be constructed.
First we note that the required correct formatting of the input, namely,
starting with $q_0'a'q_0'$
or $q_0^{(\ell_0)}a'q^{(\ell_0)}_{0,\varphi_0}$, end with $q_f^{(k)} a^* q_f^{(k)}$
($k \in C \cup \{1\}$),
have no $a'$ or $q_f^{(k)}$ in between,  
can be tested by a reversible deterministic finite automaton.
Hence, this test can be realized in an additional component using the standard 
cross product construction. 
(See, e.g.,~\cite{kutrib:2012:rpa}, where the construction for reversible pushdown automata is described.) 

We note that any string in~$\valc'_1(M)$ consists of a sequence of blocks of adjacent configurations 
having one of the following forms:
\begin{enumerate}
\item $q_0'a^nq_0'q_0'a^{2n}q_0'$ for some $n \ge 1$ (if $n=1$, $a^n=a'$),
\item $q_0'a^{n}q_0'q_0^{(2)}a^{2n}q^{(2)}_{0,\varphi_0}$ for some $n \ge 1$ (if $n=1$, $a^n=a'$),
\item $q_j^{(\ell_j)}a^{n_j}q^{(\ell_j)}_{j,\varphi_j} 
q_{j+1}^{(\ell_{j+1})}a^{n_{j+1}}q^{(\ell_{j+1})}_{j+1,\varphi_{j+1}}$, for some $j \ge 0$, 
\item $q_{n-1}^{(\ell_{n-1})}a^{n_{n-1}}q^{(\ell_{n-1})}_{{n-1},\varphi_{n-1}} q_f^{(\ell_n)}a^{n_n}q_f^{(\ell_n)}$ or $q_{n-1}^{(\ell_{n-1})}a^{n_{n-1}}q^{(\ell_{n-1})}_{{n-1},\varphi_{n-1}} q_{f'}^{(\ell_n)}a^{n_n}q_{f'}^{(\ell_n)}$, 
\item $q_{f'}^{(\ell_n)}a^{n_n}q_{f'}^{(\ell_n)}q_{f}^{(1)}a^{n_n}q_{f}^{(1)}$,
\item $q_0^{(\ell_0)}a'q^{(\ell_0)}_{0,\varphi_0} 
q_{1}^{(\ell_{1})}a^{n_{1}}q^{(\ell_{1})}_{1,\varphi_{1}}$, 
\item $q_{0}^{(\ell_{0})}a'q^{(\ell_{0})}_{{0},\varphi_{0}} q_f^{(\ell_1)}a^{n_1}q_f^{(\ell_1)}$ or $q_{0}^{(\ell_{0})}a'q^{(\ell_{0})}_{{0},\varphi_{0}} q_{f'}^{(\ell_1)}a^{n_1}q_{f'}^{(\ell_1)}$. 
\end{enumerate}
We now describe how each such block can be accepted by a quasi real-time $\revcak{1}$.
A quasi real-time $\revcak{1}$ accepting blocks of the first form basically increases the counter for every $a$ (or $a'$ if $n=1$)
from the first part and decreases the counter for every second $a$ from the second part. This can be done reversibly,
since in the backward computation every other~$a$ from the second part increases the counter while every~$a$ (or $a'$ if $n=1$)
from the first part decreases the counter. We note that the counter is empty at the end of each forward computation.
A quasi real-time $\revcak{1}$ accepting blocks of the second form can similarly be constructed.
The basic difference is that after reading $q_0^{(2)}$ in another component of the state set a counter
modulo $1/k_0$ is started, if $k_0<1$. We recall that $k_0 \in C$ is the number associated to state~$q_0$.
While reading $a$'s this counter is updated and finally compared with
$\varphi_0$ when reading $q^{(2)}_{0,\varphi_0}$. If $k_0 \ge 1$, then no counter is started and it is only checked
whether $\varphi_0$ is $0$ when reading $q^{(2)}_{0,\varphi_0}$. This additional behavior can be realized reversibly.

For blocks of the third form we first note that we can reversibly check whether $\varphi_j$ and $\varphi_{j+1}$
are correctly computed with respect to $a^{n_j}$ and $a^{n_{j+1}}$ by adapting the method described for blocks of the 
second form. Let $(q_j,k_j,p,r)$ be the transition rule for state~$q_j$
Now, a quasi real-time $\revcak{1}$ increases its counter for every $a$ from the first part. 
If $\varphi_j=0$, we know that $q_{j+1}=p$ and $\ell_{j+1}=k_j > 1$ or $n_j$ is divisible by $1/k_j$, if $k_j<1$. 
If $k_j > 1$, then we decrease the counter for every $k_j$-th $a$ from the second part.
If $k_j < 1$, then we decrease the counter by $1/k_j$ within $1/k_j$ time steps on every~$a$. This is realized
by $1/k_j-1$ stationary moves on every~$a$.
If $\varphi_j >0$, we know that $q_{j+1}=r$ and $n_j=n_{j+1}$. Hence, we decrease the counter for every $a$ from the second part.
This can be done reversibly, since in the backward computation we know due to the information $\ell_{j+1}$ what to do
on the counter. If $\ell_{j+1}=1$, then the counter is increased for every~$a$ from the second part. 
If $\ell_{j+1}>1$, then the counter is increased for every $k_j$-th $a$ from the second part. 
If $\ell_{j+1}<1$, then the counter is increased by $1/k_j$ within $1/k_j$ time steps for every~$a$ from the second part. 
Subsequently, the counter is decreased for every~$a$ from the first part. Altogether, a quasi real-time $\revcak{1}$
accepting blocks of the third form can be constructed. 
We note that the counter is empty at the end of each forward computation.
A quasi real-time $\revcak{1}$ accepting blocks of the fourth form can similarly be constructed.
The only difference is to replace $q^{(\ell_{j+1})}_{j+1}$ and $q^{(\ell_{j+1})}_{j+1,\varphi_{j+1}}$
by $q^{(\ell_{n})}_{f}$ or $q^{(\ell_{n})}_{f'}$, respectively.
Finally, by using similar ideas we can also construct quasi real-time 
$\revcak{1}$s accepting blocks of the remaining forms.

Since every block can be accepted by a quasi real-time $\revcak{1}$ and the counter is empty at the end of each computation,
we can iterate these automata and obtain a quasi real-time $\revcak{1}$ for~$\valc'_1(M)$
which can be sped-up to a real-time $\revcak{1}$ owing to Theorem~\ref{theo:dcak-quasirealtime}.
\end{proof}

\newpage

\begin{lemma}\label{lem:valc-acc-counter2}
Let $M$ be a multiplying counter machine. Then a real-time 
$\revcak{2}$ accepting the set $\valc'(M)$ can effectively be constructed from $M$.
\end{lemma}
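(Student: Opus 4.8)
The plan is to realise $\valc'(M)$ as the intersection $\valc'_1(M)\cap\valc'_2(M)$ — an equality already noted just before Lemma~\ref{lem:valc-acc-counter1} — and to use the fact that a second counter lets one run two reversible one-counter machines side by side. Concretely, I would take the real-time $\revcak{1}$s $M_1$ and $M_2$ accepting $\valc'_1(M)$ and $\valc'_2(M)$ provided by Lemma~\ref{lem:valc-acc-counter1}, and build a $\revcak{2}$ $M'$ whose state set is essentially the product of the state sets of $M_1$ and $M_2$, whose first counter mimics the counter of $M_1$, and whose second counter mimics the counter of $M_2$. A transition of $M_i$ depends only on its own state, the current input symbol, and the status of its own counter, so the product transition $\delta'$ that simultaneously performs the move of $M_1$ on the first coordinate and the move of $M_2$ on the second is well defined; declaring a configuration of $M'$ accepting iff both coordinates are accepting gives $L(M')=L(M_1)\cap L(M_2)=\valc'(M)$, and $\valc'(M)$ is empty iff $M$ accepts the empty set.

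The delicate point, and the place where the real work sits, is reversibility of the product. Forward determinism is immediate. For backward determinism the two coordinates must move the input head in lockstep at every single step, since otherwise the predecessor of a configuration cannot be recovered coordinate by coordinate. I would therefore return to the construction in the proof of Lemma~\ref{lem:valc-acc-counter1} and observe that the head-movement pattern of the one-counter machines built there is driven solely by the block structure of the input — the only stationary moves are the $1/k_j-1$ idle steps used for the modulo counting on blocks of $a$'s. Hence $M_1$ and $M_2$ can be arranged to perform literally the same sequence of head moves on every input: on a block whose transition $M_i$ does not itself have to verify, $M_i$ still traverses the block with exactly the same idle steps as dummy moves. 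With this synchronisation in place, the reverse transition of the product is just the pair of the reverse transitions of $M_1$ and $M_2$ (whose existence is guaranteed by Lemma~\ref{lem:valc-acc-counter1}), so $M'$ is reversible. The formatting requirement common to $\valc'_1(M)$ and $\valc'_2(M)$ is testable by a reversible $\dfa$ and can be carried along in an additional state component through the reversible cross-product construction, so no further counter is needed.

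The product of two such real-time machines still performs only a bounded number of consecutive stationary moves, i.e.\ it is quasi real-time, so Theorem~\ref{theo:dcak-quasirealtime} applies and converts $M'$ into an equivalent real-time $\revcak{2}$, which establishes the lemma. I expect the genuine obstacle to be precisely the lockstep argument: one must make sure the two one-counter machines never disagree on a head move — including any trailing stationary move performed just before acceptance — which may require small uniform adjustments to the automata of Lemma~\ref{lem:valc-acc-counter1}, for instance normalising each block-automaton so that it consumes a block in a fixed number of steps and defers the accept/reject decision until after the right endmarker has been read, before the cross product is taken.
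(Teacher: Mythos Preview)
Your approach is essentially the paper's: take the two $\revcak{1}$s $M_1$, $M_2$ from Lemma~\ref{lem:valc-acc-counter1} for $\valc'_1(M)$ and $\valc'_2(M)$ and form the Cartesian product, using one counter per component and accepting in $F_1\times F_2$.

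The difference is that you work harder than necessary on the lockstep issue. You propose to go back into the quasi real-time construction of Lemma~\ref{lem:valc-acc-counter1}, synchronise the idle steps of $M_1$ and $M_2$ by hand, take the product (obtaining a quasi real-time machine), and only then invoke Theorem~\ref{theo:dcak-quasirealtime}. The paper instead exploits that Lemma~\ref{lem:valc-acc-counter1} already delivers \emph{real-time} machines (Theorem~\ref{theo:dcak-quasirealtime} is applied inside that lemma), and adds the single observation that in every accepting computation of $M_1$ and of $M_2$ the input is read completely. Together with the real-time bound this forces both machines to move the head in every step but possibly the last, so the head movements are automatically in lockstep and the plain Cartesian product is already a reversible real-time $\revcak{2}$. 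Your route is correct, just longer; the paper's shortcut is worth noting because it avoids reopening the block-by-block construction.
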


\begin{proof}
We consider the real-time $\revcak{1}$s $M_1$ and $M_2$ constructed in the proof of
Lemma~\ref{lem:valc-acc-counter1} that accept $\valc'_1(M)$ and $\valc'_2(M)$, respectively.
We note that in all accepting computations in $M_1$ as well as in $M_2$ the input has been completely read.
Hence, we can apply the well known Cartesian product technique for intersection
and construct a real-time $\revcak{2}$~$M'$
that simulates $M_1$ in one component of the state set and uses one counter and
and simulates $M_2$ in a second component of the state set and uses the other counter.
The accepting states of $M'$ are defined as $F_1 \times F_2$, where $F_1$ and $F_2$ are the accepting states
in $M_1$ and $M_2$, respectively.
Hence, $M'$ accepts $\valc'_1(M) \cap \valc'_2(M)=\valc'(M)$.
\end{proof}

Now, we have all preparatory results to show the following non-semidecidability
results.

\begin{theorem}\label{theo:undec:counter2}
Let $M$ and $M'$ be two real-time $\revcak{k}$s with $k \ge 2$. Then the following questions are not semidecidable.
\begin{enumerate}
\item Is \mbox{$L(M) = \emptyset$}?
\item Is \mbox{$L(M)$} finite/infinite?
\item Is \mbox{$L(M) \subseteq L(M')$}?
\item Is \mbox{$L(M) = L(M')$}?
\item Is \mbox{$L(M)$} regular/context free?
\end{enumerate}
\end{theorem}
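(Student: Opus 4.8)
The plan is to reduce the emptiness problem of Turing machines to each of the listed questions, using the machinery developed in Lemmas~\ref{lem:valc-acc-counter1} and~\ref{lem:valc-acc-counter2}. By Theorem~\ref{theo:Hartmanis}, from a Turing machine accepting a language $L$ one effectively obtains a multiplying counter machine $M$ with $L(M)=\emptyset$ if and only if $L=\emptyset$. Combining this with the remark preceding Lemma~\ref{lem:valc-acc-counter1}, namely that $\valc'(M)$ is empty exactly when $M$ accepts the empty set, and with Lemma~\ref{lem:valc-acc-counter2}, we obtain from the Turing machine a real-time $\revcak{2}$ $M_{\valc}$ with $L(M_{\valc})=\emptyset$ iff $L=\emptyset$. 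Since emptiness of Turing machines is not semidecidable, item~(1) follows immediately for $k=2$; for larger $k$ one simply ignores the extra counters (a $\revcak{2}$ is trivially a $\revcak{k}$ for every $k\ge 2$).

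The remaining items I would derive from item~(1) by standard padding and closure tricks, each tailored so that reversibility and the real-time bound are preserved. For finiteness/infiniteness (item~2): modify $M_{\valc}$ so that, after accepting a word of $\valc'(M)$, it can read an arbitrarily long suffix over a fresh symbol and still accept; then the new language is infinite iff $\valc'(M)\neq\emptyset$ iff $L\neq\emptyset$, and it is finite (indeed empty) otherwise. Alternatively, cross $M_{\valc}$ with a real-time reversible one-counter automaton for an infinite language over disjoint alphabet and take a union via a first-symbol switch. For inclusion (item~3): take $M'$ with $L(M')=\emptyset$ (a trivial real-time $\revcak{k}$) and $M=M_{\valc}$; then $L(M)\subseteq L(M')$ iff $L(M_{\valc})=\emptyset$ iff $L=\emptyset$, and non-semidecidability of inclusion follows. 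For equivalence (item~4): with the same $M'$, $L(M)=L(M')$ iff $L(M)=\emptyset$, so equivalence is not semidecidable either. For regularity/context-freeness (item~5): I would arrange that $M_{\valc}$ accepts, besides $\valc'(M)$, a fixed non-context-free "padding" language $P$ over a disjoint alphabet (again via a leading marker and a Cartesian-product construction using Example~\ref{exa:non-cfl}); then the accepted language is $P$ when $L=\emptyset$ — non-context-free, hence non-regular — and is $P\cup\valc'(M)$ otherwise. The delicate point is to choose $P$ and the encoding of $\valc'(M)$ so that $P\cup\valc'(M)$ is \emph{regular} (equivalently context-free) precisely when $\valc'(M)=\emptyset$; this can be ensured because any nonempty $\valc'(M)$ contains words whose structure (the doubling prefix $q_0'a q_0' q_0' a^2 q_0'\cdots$ combined with the arithmetic constraints on the $a$-blocks) is provably non-context-free by a pumping argument, so adding it to $P$ cannot yield a context-free language.

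The main obstacle I anticipate is not item~(1), which is essentially immediate once Lemma~\ref{lem:valc-acc-counter2} is in hand, but verifying that the padding constructions for items~(2) and~(5) keep the automaton \emph{reversible} and \emph{real-time} simultaneously. Appending a free suffix over a new symbol is harmless (a single reversible loop state suffices and the input head keeps moving, so real time is preserved), and prefixing by a disjoint-alphabet marker that selects between two sub-automata is the reversible-pushdown-style union construction already cited in the proof of Lemma~\ref{lem:valc-acc-counter1}; the genuine care is needed in showing the backward transition function remains well defined at the "seam" where the marker is read, and in confirming that the non-context-freeness of the $\valc'(M)$-part survives union with $P$. I would handle the latter with a Bar-Hillel pumping argument on a word of $\valc'(M)$, exploiting that the exponentially growing $a$-blocks cannot all be pumped consistently. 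With these verifications, all five non-semidecidability statements follow for every $k\ge 2$.
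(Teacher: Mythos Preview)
Your reductions for items~(1), (3), and~(4) are correct and coincide with the paper's argument. The gaps are in items~(2) and~(5), and they have a common cause: you try to push everything through the \emph{emptiness} problem for Turing machines, whereas the paper routes items~(2) and~(5) through the \emph{finiteness} problem.

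For item~(2), your padding makes the accepted language infinite iff $\valc'(M)\neq\emptyset$ iff the Turing machine's language $L$ is nonempty. That yields non-semidecidability of ``Is $L(M)$ finite?'' (since $L=\emptyset$ is not semidecidable), but it does \emph{not} yield non-semidecidability of ``Is $L(M)$ infinite?'': under your reduction the latter question is equivalent to $L\neq\emptyset$, and non-emptiness of Turing machines \emph{is} semidecidable. The paper avoids this by observing that $\valc'(M_0)$ is finite (respectively infinite) iff the Turing machine's language is finite (respectively infinite), and both the finiteness and the infiniteness problems for Turing machines are non-semidecidable; no padding is needed.

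For item~(5), your construction is self-contradictory. You fix a non-context-free $P$ and take the union $P\cup\valc'(M)$, then state the ``delicate point'' is to arrange that this union is regular precisely when $\valc'(M)=\emptyset$. But when $\valc'(M)=\emptyset$ the union \emph{is} $P$, which you chose to be non-context-free, so the equivalence cannot hold. Moreover, your pumping claim that ``any nonempty $\valc'(M)$ is provably non-context-free'' fails whenever the Turing machine accepts a nonempty \emph{finite} set: then $\valc'(M_0)$ is a nonempty finite set, hence regular. The paper's argument is again simpler and avoids padding entirely: one shows (as for ordinary $\valc$) that $\valc'(M_0)$ is non-context-free whenever $M_0$ accepts an infinite set, and is finite (hence regular and context-free) otherwise. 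Thus ``Is $L(M_1)$ regular?'' and ``Is $L(M_1)$ context-free?'' are both equivalent to finiteness of the Turing machine's language, which is not semidecidable.
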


\begin{proof}
Let $T$ be some Turing machine accepting a recursively enumerable set over $\{1,2\}^*$
and $M_0$ its corresponding multiplying counter machine
according to Theorem~\ref{theo:Hartmanis}.
Then, the language $L(T)$ is empty if and only if $M_0$ accepts the empty
set. Moreover, $M_0$ accepts the empty set if and only if $\valc'(M_0)=\emptyset$
and $T$ accepts a finite set if and only if $M_0$ accepts a finite set if and only if 
$\valc'(M_0)$ is finite.

Now, let $M_1$ be a real-time $\revcak{2}$ accepting $\valc'(M_0)$ according to the
construction in Lemma~\ref{lem:valc-acc-counter2}. Hence, $L(M_1)$ is empty,
finite, or infinite if and only if the Turing machine $T$ accepts an empty,
finite, or infinite set. Since the latter questions are not semidecidable
for Turing machines (see, e.g.,~\cite{Hopcroft:1979:itatlc:book}), 
they are not semidecidable for real-time $\revcak{2}$s as well.

It is easy to construct a real-time $\revcak{2}$ accepting the empty set.
If the questions of inclusion and equivalence would be semidecidable, the question of
emptiness would be semidecidable as well which is a contradiction. Hence, both questions are not semidecidable. 

It is described in~\cite{Hopcroft:1979:itatlc:book} how to define the set of
valid computations $\valc(T)$ of a Turing machine~$T$. 
In addition, it is shown there with the help of the pumping lemma that 
$\valc(T)$ is not a context-free language if $T$ accepts an infinite language. 
It can be shown with a similar approach that $\valc'(M_0)$ is not a context-free language
if $M_0$ accepts an infinite language. On the other hand, if $M_0$ accepts a 
finite language, then $\valc'(M_0)$ is finite and hence in particular a regular and
a context-free language. Altogether, we have that $\valc'(M_0)$ is finite if and only
if $M_0$ is regular or context free. If the regularity or context-freeness of
a real-time $\revcak{2}$ would be semidecidable, we would therefore obtain that
the finiteness problem for a real-time $\revcak{2}$ is semidecidable as well
which is a contradiction and shows the remaining claim of the theorem.
\end{proof}

\newpage

\end{document}